\theoremstyle{plain}
\newtheorem{Ther}{Theorem}
\newtheorem{Cor}{Corollary}
\newtheorem{Lem}{Lemma}
\newtheorem*{Lem4}{Lemma 4}
\newtheorem*{Lem5}{Lemma 5}
\newtheorem*{Lem6}{Lemma 6}
\theoremstyle{definition}
\newtheorem{Def}{Definition}
\newtheorem*{Rem}{Remark}
\newcommand{\binomial}{{\rm binomial}}
\newcommand{\abs}[1]{\left| #1 \right|}
\newcommand{\set}[1]{\left\{ #1 \right\}}
\newcommand{\intSet}[1]{\left[\!\left[ #1 \right]\!\right]}
\newcommand{\pran}[1]{\left( #1 \right)}
\newcommand{\event}[1]{\left[\, #1 \,\right]}
\newcommand{\ind}[1]{\, \mathbb{I}_{\event{ #1 }} \,}
\newcommand{\prob}[1]{\mathbb{P}\set{#1}}
\newcommand{\Exp}[1]{\mathbf{E}\event{#1}}
\newcommand{\ceil}[1]{\left\lceil \, #1 \,\right\rceil}
\newcommand{\floor}[1]{\left\lfloor \, #1 \,\right\rfloor}
\begin{document}

\title{Two-way Linear Probing Revisited}

\author{Ketan Dalal\thanks{Research of the authors was supported by NSERC Grant A3456.
Emails: lucdevroye@gmail.com and emalalla@ahlia.edu.bh.}\;, Luc Devroye\footnotemark[1]\;, and
Ebrahim Malalla\footnotemark[1] \thanks{Author's current address: Department of Mathematical
Sciences, Ahlia University, P.O. Box 10878, GOSI Complex, Manama, Bahrain. The author wishes to
acknowledge the support provided by Ahlia University.}\\
School of Computer Science \\
McGill University\\
Montreal, Canada H3A 2K6}

\date{}

\maketitle

\noindent \textsc{Abstract.} We introduce linear probing hashing schemes that construct a hash
table of size $n$, with constant load factor $\alpha$, on which the worst-case unsuccessful search
time is asymptotically almost surely $O(\log \log n)$. The schemes employ two linear probe
sequences to find empty cells for the keys. Matching lower bounds on the maximum cluster size
produced by any algorithm that uses two linear probe sequences are obtained as well.

\bigskip

\noindent \textsc{Categories and Subject Descriptors:} E.2 [\textbf{Data}]: Data Storage
Representa-tions---hash-table representations; F.2.2 [\textbf{Analysis of Algorithms and Problem
Complexity}]: Non-numerical Algorithms and Problems---sorting and searching

\bigskip

\noindent \textsc{General Terms:} Algorithms, Theory

\bigskip

\noindent \textsc{Additional Keywords and Phrases:} Open addressing hashing, linear probing,
parking problem, worst-case search time, two-way chaining, multiple-choice paradigm, randomized
algorithms, witness tree, probabilistic analysis.

\section{Introduction}

In classical open addressing hashing \cite{Peterson:57}, $m$ keys are hashed sequentially and
on-line into a table of size $n > m$, (that is, a one-dimensional array with $n$ cells which we
denote by the set $\mathcal{T}=\set{0,\ldots, n-1}$), where each cell can harbor at most one key.
Each key $x$ has only one infinite probe sequence $f_i(x) \in \mathcal{T}$, for $i \in \mathbb{N}$.
During the insertion process, if a key is mapped to a cell that is already occupied by another key,
a collision occurs, and another probe is required. The probing continues until an empty cell is
reached where a key is placed. This method of hashing is pointer-free, unlike hashing with separate
chaining where keys colliding in the same cell are hashed to a separate linked list or chain. For a
discussion of different hashing schemes see \cite{Gonnet:91, Knuth:73-3, Vitter:90}.

The purpose of this paper is to design efficient open addressing hashing schemes that improve the
worst-case performance of classical linear probing where  $f_{i+1}(x) = f_i(x) + 1 \mod n$, for $i
\in \intSet{n} : = \set{1, \ldots, n}$. Linear probing is known for its good practical performance,
efficiency, and simplicity. It continues to be one of the best hash tables in practice due to its
simplicity of implementation, absence of overhead for internally used pointers, cache efficiency,
and locality of reference \cite{Janson:16, Pagh:11, Richter:15, Thorup:12}. On the other hand, the
performance of linear probing seems to degrade with high load factors $m/n$, due to a
primary-clustering tendency of one collision to cause more nearby collisions.

Our study concentrates on schemes that use two linear probe sequences to find possible hashing
cells for the keys. Each key chooses two initial cells independently and uniformly at random, with
replacement. From each initial cell, we probe linearly, and cyclically whenever the last cell in
the table is reached, to find two empty cells which we call terminal cells. The key then is
inserted into one of these terminal cells according to a fixed strategy. We consider strategies
that utilize the \emph{greedy multiple-choice paradigm} \cite{ABKU:00, Vocking:99}. We show that
some of the trivial insertion strategies with two-way linear probing have unexpected poor
performance. For example, one of the trivial strategies we study inserts each key into the terminal
cell found by the shorter probe sequence. Another simple strategy inserts each key into the
terminal cell that is adjacent to the smaller cluster, where a cluster is an isolated set of
consecutively occupied cells. Unfortunately, the performances of these two strategies are not
ideal. We prove that when any of these two strategies is used to construct a hash table with
constant load factor, the maximum unsuccessful search time is $\Omega(\log n)$, with high
probability (w.h.p.). Indeed, we prove that, w.h.p., a giant cluster of size $\Omega(\log n)$
emerges in a hash table of constant load factor, if it is constructed by a two-way linear probing
insertion strategy that always inserts any key upon arrival into the empty cell of its two initial
cells whenever one of them is empty.

Consequently, we introduce two other strategies that overcome this problem. First, we partition the
hash table into equal-sized blocks of size $\beta$, assuming $n/\beta$ is an integer. We consider
the following strategies for inserting the keys:
\begin{enumerate}
  \item[A.] Each key is inserted into the terminal cell that belongs to the least crowded block,
   i.e., the block with the least number of keys.
  \item[B.] For each block $i$, we define its weight to be the number of keys inserted into
terminal cells found by linear probe sequences whose starting locations belong to block $i$. Each
key, then, is inserted into the terminal cell found by the linear probe sequence that has started
from the block of smaller weight.
\end{enumerate}
For strategy B, we show that $\beta$ can be chosen such that for any constant load factor $\alpha
:= m/n$, the maximum unsuccessful search time is not more than $c \log_2 \log n$, w.h.p., where $c$
is a function of $\alpha$. If $\alpha < 1/2$, the same property also holds for strategy A.
Furthermore, these schemes are optimal up to a constant factor in the sense that an $\Omega(\log
\log n)$ universal lower bound holds for any strategy that uses two linear probe sequences, even if
the initial cells are chosen according to arbitrary probability distributions.

For hashing with separate chaining, one can achieve $O(\log \log n)$ maximum search time by
applying the \emph{two-way chaining} scheme \cite{ABKU:00} where each key is inserted into the
shorter chain among two chains chosen independently and uniformly at random, with replacement,
breaking ties randomly. It is proved \cite{ABKU:00, Beren:00a} that when $r=\Omega(n)$ keys are
inserted into a hash table with $n$ chains, the length of the longest chain upon termination is
$\log_2 \log n + r/n \pm O(1)$, w.h.p. Of course, this idea can be generalized to open addressing.
Assuming the hash table is partitioned into blocks of size $\beta$, we allow each key to choose two
initial cells, and hence two blocks, independently and uniformly at random, with replacement. From
each initial cell and \emph{within its block}, we probe linearly and cyclically, if necessary, to
find two empty cells; that is, whenever we reach the last cell in the block and it is occupied, we
continue probing from the first cell in the same block. The key, then, is inserted into the empty
cell that belongs to the least full block. Using the two-way chaining result, one can show that for
suitably chosen $\beta$, the maximum unsuccessful search time is $O(\log \log n)$, w.h.p. However,
this scheme uses probe sequences that are not totally linear; they are locally linear within the
blocks.

\subsection{History and Motivation}

\subsubsection*{Probing and Replacement}

Open addressing schemes are determined by the type of the probe sequence, and the replacement
strategy for resolving the collisions. Some of the commonly used probe sequences are:
\begin{enumerate}
   \item \textbf{Random Probing \cite{Morris:68}:}
For every key $x$, the infinite sequence $f_i(x)$ is assumed to be independent and uniformly
distributed over $\mathcal{T}$. That is, we require to have an infinite sequence $f_i$ of truly
uniform and independent hash functions. If for each key $x$, the first $n$ probes of the sequence
$f_i(x)$ are distinct, i.e., it is a random permutation, then it is called \textbf{uniform probing}
\cite{Peterson:57}.
  \item \textbf{Linear Probing \cite{Peterson:57}:}
For every key $x$, the first probe $f_1(x)$ is assumed to be uniform on $\mathcal{T}$, and the next
probes are defined by $f_{i+1}(x) = f_i(x) + 1 \mod n$, for $i \in \intSet{n}$. So we only require
$f_1$ to be a truly uniform hash function.
  \item \textbf{Double Probing \cite{Balbine:69}:}
For every key $x$, the first probe is $f_1(x)$, and the next probes are defined by $f_{i+1}(x) =
f_i(x) + g(x) \mod n$, for $i \in \mathbb{N}$, where $f_1$ and $g$ are truly uniform and
independent hash functions.
\end{enumerate}

Random and uniform probings are, in some sense, the idealized models \cite{Ullman:72, Yao:85}, and
their plausible performances are among the easiest to analyze; but obviously they are unrealistic.
Linear probing is perhaps the simplest to implement, but it behaves badly when the table is almost
full. Double probing can be seen as a compromise.

During the insertion process of a key $x$, suppose that we arrive at the cell $f_i(x)$ which is
already occupied by another previously inserted key $y$, that is, $f_i(x) = f_j(y)$, for some $j
\in \mathbb{N}$. Then a replacement strategy for resolving the collision is needed. Three
strategies have been suggested in the literature (see \cite{Munro:86} for other methods):
\begin{enumerate}
  \item \textsc{first come first served (fcfs) \cite{Peterson:57}:}
The key $y$ is kept in its cell, and the  key $x$ is referred to the next cell $f_{i+1}(x)$.
  \item \textsc{last come first served (lcfs) \cite{Poblete:89}:}
The key $x$ is inserted into the cell $f_i(x)$, and the key $y$ is pushed along to the next cell in
its probe sequence, $f_{j+1}(y)$.
  \item \textsc{robin hood \cite{Celis:86, Celis:85}:}
The key which travelled the furthest is inserted into the cell. That is, if $i >j$, then the key
$x$ is inserted into the cell $f_i(x)$, and the key $y$ is pushed along to the next cell
$f_{j+1}(y)$; otherwise, $y$ is kept in its cell, and the key $x$ tries its next cell $f_{i+1}(x)$.
\end{enumerate}

\subsubsection*{Average Performance}

Evidently, the performance of any open addressing scheme deteriorates when the ratio $m/n$
approaches 1, as the cluster sizes increase, where a cluster is an isolated set of consecutively
occupied cells (cyclically defined) that are bounded by empty cells. Therefore, we shall assume
that the hash table is $\alpha$-full, that is, the number of hashed keys $m = \floor{\alpha n}$,
where $\alpha \in (0,1)$ is a constant called the load factor. The asymptotic average-case
performance has been extensively analyzed for random and uniform probing \cite{Bollobas:90,
Larson:83, Morris:68, Peterson:57, Ullman:72,  Yao:85}, linear probing \cite{Knuth:63, Knuth:73-3,
Konheim:66, Mendel:80}, and double probing \cite{Balbine:69, Guibas:78, Lueker:93, Schmidt:-s,
Siegel:-s}. The expected search times were proven to be constants, more or less, depending on
$\alpha$ only. Recent results about the average-case performance of linear probing, and the limit
distribution of the construction time have appeared in \cite{Flajolet:98, Knuth:98, Viola:98}. See
also \cite{Aldous:88, Gonnet:80, Pflug:87} for the average-case analysis of linear probing for
nonuniform hash functions.

It is worth noting that the average search time of linear probing is independent of the replacement
strategy; see \cite{Knuth:73-3, Peterson:57}. This is because the insertion of any order of the
keys results in the same set of occupied cells, i.e., the cluster sizes are the same; and hence,
the total displacement of the keys---from their initial hashing locations---remains unchanged. It
is not difficult to see that this independence is also true for random and double probings. That
is, the replacement strategy does not have any effect on the average successful search time in any
of the above probings. In addition, since in linear probing the unsuccessful search time is related
to the cluster sizes (unlike random and double probings), the expected and the maximum unsuccessful
search times in linear probing are invariant to the replacement strategy.

It is known that \textsc{lcfs} \cite{Poblete:89, Poblete:97} and \textsc{robin hood}
\cite{Celis:86, Celis:85, Munro:86, Viola:98} strategies minimize the variance of displacement.
Recently, Janson \cite{Janson:03} and Viola \cite{Viola:03} have reaffirmed the effect of these
replacement strategies on the individual search times in linear probing hashing.

\subsubsection*{Worst-case Performance}

The focal point of this article, however, is the worst-case search time which is proportional to
the length of the longest probe sequence over all keys (\textsc{llps}, for short). Many results
have been established regarding the worst-case performance of open addressing.

The worst-case performance of linear probing with \textsc{fcfs} policy was analyzed by Pittel
\cite{Pittel:87}. He proved that the maximum cluster size, and hence the \textsc{llps} needed to
insert (or search for) a key, is asymptotic to $(\alpha - 1 - \log \alpha)^{-1} \log n$, in
probability. As we mentioned above, this bound holds for linear probing with \emph{any} replacement
strategy. Chassaing and Louchard \cite{Chass:02} studied the threshold of emergence of a giant
cluster in linear probing. They showed that when the number of keys $m = n - \omega(\sqrt{n})$, the
size of the largest cluster is $o(n)$, w.h.p.; however, when $m =n-o(\sqrt{n})$, a giant cluster of
size $\Theta(n)$ emerges, w.h.p.

Gonnet \cite{Gonnet:81} proved that with uniform probing and \textsc{fcfs} replacement strategy,
the expected \textsc{llps} is asymptotic to $\log_{1/\alpha} n - \log_{1/\alpha} \log_{1/\alpha} n
+ O(1)$, for $\alpha$-full tables. However, Poblete and Munro \cite{Poblete:89, Poblete:97} showed
that if random probing is combined with \textsc{lcfs} policy, then the expected \textsc{llps} is at
most $(1+o(1)) \Gamma^{-1}(\alpha n) = O(\log n / \log \log n)$, where $\Gamma$ is the gamma
function.

On the other hand, the \textsc{robin hood} strategy with random probing leads to a more striking
performance. Celis \cite{Celis:86} first proved that the expected \textsc{llps} is $O(\log n)$.
However, Devroye, Morin and Viola \ \cite{Devroye:03b} tightened the bounds and revealed that the
\textsc{llps} is indeed $\log_2 \log n \pm \Theta(1)$, w.h.p., thus achieving a double logarithmic
worst-case insertion and search times for the first time in open addressing hashing. Unfortunately,
one cannot ignore the assumption in random probing about the availability of an infinite collection
of hash functions that are sufficiently independent and behave like truly uniform hash functions in
practice. On the other side of the spectrum, we already know that \textsc{robin hood} policy does
not affect the maximum unsuccessful search time in linear probing. However, \textsc{robin hood} may
be promising with double probing.

\subsubsection*{Other Initiatives}

Open addressing methods that rely on rearrangement of keys were under investigation for many years,
see, e.g., \cite{Brent:73, Gonnet:79, Madison:80, Mallach:77, Munro:86, Rivest:78}. Pagh and Rodler
\cite{Pagh:01b} studied a scheme called cuckoo hashing that exploits the \textsc{lcfs} replacement
policy. It uses two hash tables of size $n> (1 + \epsilon)m$, for some constant $\epsilon >0$; and
two independent hash functions chosen from an $O(\log n)$-universal class---one function only for
each table. Each key is hashed initially by the first function to a cell in the first table. If the
cell is full, then the new key is inserted there anyway, and the old key is kicked out to the
second table to be hashed by the second function. The same rule is applied in the second table.
Keys are moved back and forth until a key moves to an empty location or a limit has been reached.
If the limit is reached, new independent hash functions are chosen, and the tables are rehashed.
The worst-case search time is at most two, and the amortized expected insertion time, nonetheless,
is constant. However, this scheme utilizes less than 50\% of the allocated memory, has a worst-case
insertion time of $O(\log n)$, w.h.p., and depends on a wealthy source of provably good independent
hash functions for the rehashing process. For further details see \cite{Devroye:03a, Dietzf:03,
Fotakis:03, Ostlin:03}.

The space efficiency of cuckoo hashing is significantly improved when the hash table is divided
into blocks of fixed size $b \geq 1$ and more hash functions are used to choose $k \geq 2$ blocks
for each key where each is inserted into a cell in one of its chosen blocks using the cuckoo random
walk insertion method \cite{Dietzf:05, Frieze:12, Fountoulakis:12, Fountoulakis:13, Lehman:09,
Walzer:18}. For example, it is known \cite{Dietzf:05, Lehman:09} that 89.7\% space utilization can
be achieved when $k=2$ and the hash table is partitioned into non-overlapping blocks of size $b=2$.
On the other hand, when the blocks are allowed to overlap, the space utilization improves to 96.5\%
\cite{Lehman:09, Walzer:18}. The worst-case insertion time of this generalized cuckoo hashing
scheme, however, is proven \cite{Fountoulakis:13, Frieze:11} to be polylogarithmic, w.h.p.

Many real-time static and dynamic perfect hashing schemes achieving constant worst-case search
time, and linear (in the table size) construction time and space were designed in \cite{Broder:90,
Dietzf:90, Dietzf:92b, Dietzf:92a, Dietzf:94, Fredman:84, Pagh:99, Pagh:01a}. All of these schemes,
which are based, more or less, on the idea of multilevel hashing, employ more than a constant
number of perfect hash functions chosen from an efficient universal class. Some of them even use
$O(n)$ functions.

\subsection{Our Contribution}

We design linear probing algorithms that accomplish double logarithmic worst-case search time.
Inspired by the two-way chaining algorithm \cite{ABKU:00}, and its powerful performance, we promote
the concept of open addressing hashing with two-way linear probing. The essence of the proposed
concept is based on the idea of allowing each key to generate two independent linear probe
sequences and making the algorithm decide, according to some strategy, at the end of which sequence
the key should be inserted. Formally, each input key $x$ chooses two cells independently and
uniformly at random, with replacement. We call these cells the \emph{initial hashing cells}
available for $x$. From each initial hashing cell, we start a linear probe sequence (with
\textsc{fcfs} policy) to find an empty cell where we stop. Thus, we end up with two unoccupied
cells. We call these cells the \emph{terminal hashing cells}. The question now is: into which
terminal cell should we insert the key $x$?

The insertion process of a two-way linear probing algorithm could follow one of the strategies we
mentioned earlier: it may insert the key at the end of the shorter probe sequence, or into the
terminal cell that is adjacent to the smaller cluster. Others may make an insertion decision even
before linear probing starts. In any of these algorithms, the searching process for any key is
basically the same: just start probing in both sequences alternately, until the key is found, or
the two empty cells at the end of the sequences are reached in the case of an unsuccessful search.
Thus, the maximum unsuccessful search time is at most twice the size of the largest cluster plus
two.

We study the two-way linear probing algorithms stated above, and show that the hash table,
asymptotically and almost surely, contains a giant cluster of size $\Omega(\log n)$. Indeed, we
prove that a cluster of size $\Omega(\log n)$ emerges, asymptotically and almost surely, in any
hash table of constant load factor that is constructed by a two-way linear probing algorithm that
inserts any key upon arrival into the empty cell of its two initial cells whenever one of them is
empty.

We introduce two other two-way linear probing heuristics that lead to $\Theta(\log \log n)$ maximum
unsuccessful search times. The common idea of these heuristics is the marriage between the two-way
linear probing concept and a technique we call \emph{blocking} where the hash table is partitioned
into equal-sized blocks. These blocks are used by the algorithm to obtain some information about
the keys allocation. The information is used to make better decisions about where the keys should
be inserted, and hence, lead to a more even distribution of the keys.

Two-way linear probing hashing has several advantages over other proposed hashing methods,
mentioned above: it reduces the worst-case behavior of hashing, it requires only two hash
functions, it is easy to parallelize, it is pointer-free and easy to implement, and unlike the
hashing schemes proposed in \cite{Dietzf:05, Pagh:01b}, it does not require any rearrangement of
keys or rehashing. Its maximum cluster size is $O(\log \log n)$, and its average-case performance
can be at most twice the classical linear probing as shown in the simulation results. Furthermore,
it is not necessary to employ perfectly random hash functions as it is known \cite{Pagh:11,
Richter:15, Thorup:12} that hash functions with smaller degree of universality will be sufficient
to implement linear probing schemes. See also \cite{Dietzf:92a, Karp:96, Ostlin:03, Pagh:01b,
Schmidt:-s, Siegel:95, Siegel:-s} for other suggestions on practical hash functions.

\subsubsection*{Paper Scope}

In the next section, we recall some of the useful results about the greedy multiple-choice
paradigm. We prove, in Section \ref{lowerSec}, a universal lower bound of order of $\log \log n$ on
the maximum unsuccessful search time of any two-way linear probing algorithm. We prove, in
addition, that not every two-way linear probing scheme behaves efficiently. We devote Section
\ref{blockSec} to the positive results, where we present our two two-way linear probing heuristics
that accomplish $O(\log \log n)$ worst-case unsuccessful search time. Simulation results of the
studied algorithms are summarized in Section \ref{SimulationSec}.

Throughout, we assume the following. We are given $m$ keys---from a universe set of keys
$\mathcal{U}$---to be hashed into a hash table of size $n$ such that each cell contains at most one
key. The process of hashing is sequential and on-line, that is, we never know anything about the
future keys. The constant $\alpha \in (0,1)$ is preserved in this article for the load factor of
the hash table, that is, we assume that $m = \floor{\alpha n}$. The $n$ cells of the hash table are
numbered $0, \ldots, n-1$. The linear probe sequences always move cyclically from left to right of
the hash table. The replacement strategy of all of the introduced algorithms is \textsc{fcfs}. The
\emph{insertion time} is defined to be the number of probes the algorithm performs to insert a key.
Similarly, the \emph{search time} is defined to be the number of probes needed to find a key, or
two empty cells in the case of unsuccessful search. Observe that unlike classical linear probing,
the insertion time of two-way linear probing may not be equal to the successful search time.
However, they are both bounded by the unsuccessful search time. Notice also that we ignore the time
to compute the hash functions.

\section{The Multiple-choice Paradigm}\label{MultipleSec}

Allocating balls into bins is one of the historical assignment problems \cite{Johnson:77,
Kolchin:78}. We are given $r$ balls that have to be placed into $s$ bins. The balls have to be
inserted sequentially and on-line, that is, each ball is assigned upon arrival without knowing
anything about the future coming balls. The \emph{load of a bin} is defined to be the number of
balls it contains. We would like to design an allocation process that minimizes the maximum load
among all bins upon termination. For example, in a classical allocation process, each ball is
placed into a bin chosen independently and uniformly at random, with replacement. It is known
\cite{Gonnet:81, Mitzen:96, Raab:98} that if $r =\Theta(s)$, the maximum load upon termination is
asymptotic to $\log s / \log \log s$, in probability.

On the other hand, the \emph{greedy multiple-choice allocation process}, appeared in
\cite{Eager:86, Karp:96} and studied by Azar et al.\ \cite{ABKU:00}, inserts each ball into the
least loaded bin among $d \geq 2$ bins chosen independently and uniformly at random, with
replacement, breaking ties randomly. Throughout, we will refer to this process by
\textsc{GreedyMC}$(s,r,d)$ for inserting $r$ balls into $s$ bins. Surprisingly, the maximum bin
load of \textsc{GreedyMC}$(s,s,d)$ decreases exponentially to $\log_d \log s \pm O(1)$, w.h.p.,
\cite{ABKU:00}. However, one can easily generalize this to the case $r=\Theta(s)$. It is also known
that the greedy strategy is stochastically optimal in the following sense.

\begin{Ther}[Azar et al.\ \cite{ABKU:00}] \label{ABKUTher}
Let $s, r, d \in \mathbb{N}$, where $d \geq 2$, and $r = \Theta(s)$. Upon termination of
\textsc{GreedyMC}$(s,r,d)$, the maximum bin load is $\log_d \log s \pm O(1)$, w.h.p. Furthermore,
the maximum bin load of any on-line allocation process that inserts $r$ balls sequentially into $s$
bins where each ball is inserted into a bin among $d$ bins chosen independently and uniformly at
random, with replacement, is at least $\log_d \log s - O(1)$, w.h.p.
\end{Ther}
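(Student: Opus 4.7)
The plan is to reprove the theorem by adapting the layered induction of Azar, Broder, Karlin and Upfal, together with a witness tree lower bound. Throughout, let $c := r/s$ so that $c$ is bounded; I will treat $c$ as a constant. Write $h_i$ for the number of bins with load at least $i$ at the end of the process.

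For the upper bound, I would first pin down a small starting height $i_0$ such that $h_{i_0} \leq s/(2e)$ with high probability; a direct Chernoff argument on the classical single-choice allocation (which stochastically dominates the greedy process at every bin) suffices, since each ball is placed into one of $d$ uniform bins. The heart of the argument is then the layered induction. I claim that for every $i \geq i_0$, conditional on a deterministic bound $h_i \leq \beta_i$, the expected number of balls whose $d$ choices all fall into bins of height $\geq i$ is at most $r(\beta_i/s)^d$, because a ball raises some bin to height $\geq i+1$ only if all $d$ of its choices were already at height $\geq i$ at the time of insertion. A Chernoff bound over the $r=\Theta(s)$ balls then yields
\begin{equation*}
h_{i+1} \;\leq\; 2\, r \,(\beta_i/s)^d \;=:\; \beta_{i+1}
\end{equation*}
with failure probability at most $e^{-\Omega(\beta_{i+1})}$, so long as $\beta_{i+1}$ stays above $\log s$. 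Unrolling the recursion $\beta_{i+1} \approx s(\beta_i/s)^d$ shows that $\beta_i$ drops below $\log s$ after $O(\log\log s / \log d)$ steps, i.e.\ at level $i_0 + \log_d\log s + O(1)$.

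Once the count of heavy bins is below $\log s$, the Chernoff step no longer gives useful concentration, so I would finish with two short extra layers handled by direct union bounds: first argue that $h_{i^\star}\le 1$ at some $i^\star = i_0+\log_d\log s+O(1)$ by a first-moment bound, and then handle one more level by noting that the probability any specific bin jumps by more than $O(1)$ after all its neighbours have already been controlled is polynomially small. Taking a union bound over $O(\log\log s)$ levels preserves the high-probability claim.

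For the lower bound, I would build a witness tree for a bin of load $k$: a complete $d$-ary tree of depth $k$ whose internal nodes are distinct balls and whose $d$ children record the $d$ choices of that ball, with the root's children including bins of load $\geq k-1$ in the algorithm's state just before the root ball was inserted. Because the algorithm is forced to pick one of the $d$ random choices presented, the existence of such a tree (with appropriately heavy bins at every level) certifies max load $\geq k$ regardless of the strategy used to break among the $d$ choices. Counting the expected number of such labelled trees and using the independence of the $rd$ choices yields a lower bound on this expectation that is $\Omega(1)$ when $k = \log_d\log s - O(1)$; a second-moment argument (as in the ABKU coupling) upgrades existence in expectation to existence with high probability.

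The main obstacle is the endgame of the upper-bound induction: the layered bound degrades once $\beta_i$ is no longer much larger than $\log s$, and one has to splice in a different finishing argument without losing the $O(1)$ slack in the final height. The lower-bound witness tree is conceptually clean but also delicate, because the argument must hold uniformly over all online strategies, which forces the tree definition to depend only on the sequence of random choices rather than on the algorithm's decisions.
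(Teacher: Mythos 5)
The paper does not prove Theorem~\ref{ABKUTher} at all; it is cited verbatim from Azar et al.\ \cite{ABKU:00} and used as a black box (e.g.\ in the proof of Theorem~\ref{universalLower1}). So the comparison here is with the argument in \cite{ABKU:00}, not with anything in the present paper.

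Your upper bound sketch is essentially the layered induction of \cite{ABKU:00}. The recursion $\beta_{i+1}\approx s(\beta_i/s)^d$, the Chernoff step, and the observation that the Chernoff estimate fails once $\beta_i$ drops near $\log s$ so that the last few levels must be finished with a direct first-moment/union bound, are all faithful to the original. This part looks sound, modulo the usual care about conditioning (the process is adaptive, so the domination and independence claims need the standard coupling/martingale hygiene, which you gesture at but do not spell out).

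Your lower bound is not what \cite{ABKU:00} do, and as stated it has a genuine gap. You define the witness tree in terms of ``the algorithm's state just before the root ball was inserted'' --- the children must be bins that already have load $\ge k-1$ at that moment. That is a strategy-dependent object: which bins are heavy at a given time depends entirely on the online policy, which is adversarial in the lower-bound statement. You flag this tension yourself in your final paragraph, but the tree you describe does not resolve it, so the certification claim (``the existence of such a tree \dots\ certifies max load $\ge k$ regardless of the strategy'') does not follow. To make a witness structure strategy-independent you would have to pass to a purely combinatorial object determined by the $rd$ random choices alone --- e.g.\ a sufficiently dense sub-hypergraph of balls whose choices are all confined to a small bin set, so that pigeonhole forces a heavy bin under \emph{every} placement rule --- and then run the second-moment argument on that. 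That is a different and rather more delicate object than a complete $d$-ary tree of ball-choices. What \cite{ABKU:00} actually do is a phased/layered argument that works directly with the adversary: they partition the insertion sequence into $\Theta(\log_d\log s)$ phases and show, by induction on phases, that the number of bins of load $\ge i$ after phase $i$ cannot drop below a doubly-exponentially decaying threshold no matter how the online algorithm resolves ties, because any ball all of whose $d$ choices land in the current heavy set must, by force, push some bin one level higher. That argument sidesteps the need to define a strategy-independent witness structure, which is exactly where your sketch stalls.

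Summary: your upper bound reproduces the standard \cite{ABKU:00} approach; your lower bound takes a genuinely different (witness tree plus second moment) route, but the witness tree as defined is strategy-dependent, so the central certification step is not justified, and you would need to replace it either with a strategy-independent dense-substructure argument or with the phased induction of \cite{ABKU:00}.
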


Berenbrink et al.\ \cite{Beren:00a} extended Theorem \ref{ABKUTher} to the
heavily loaded case where $r \gg s$, and recorded the following tight
result.

\begin{Ther}[Berenbrink et al.\ \cite{Beren:00a}] \label{BerenTher1}
There is a constant $C>0$ such that for any integers $r \geq s >0$,
and $d \geq 2$, the maximum bin load upon termination of the process
\textsc{GreedyMC}$(s,r, d)$ is $\log_d \log s + r/s \pm C$, w.h.p.
\end{Ther}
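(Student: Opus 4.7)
The plan is to prove matching upper and lower bounds of the form $r/s + \log_d \log s \pm O(1)$. The bulk term $r/s$ is forced by averaging: the total load across all bins equals $r$, so some bin holds at least $\lceil r/s \rceil$ balls. The excess term $\log_d \log s$ captures the imbalance of the greedy multiple-choice process above the mean, and it is this term that requires real work.

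For the lower bound I would freeze the process at time $r-s$ and treat the final $s$ insertions as a fresh execution of $\textsc{GreedyMC}(s,s,d)$ superimposed on the existing load vector. By averaging, at time $r-s$ some bin already holds at least $\lfloor (r-s)/s \rfloor$ balls. Running the last $s$-ball instance and invoking the light-case lower bound from Theorem \ref{ABKUTher} should yield, w.h.p., a bin whose final load is at least $\lfloor (r-s)/s\rfloor + \log_d \log s - O(1)$, which rearranges to $r/s + \log_d \log s - O(1)$. A small coupling argument is needed to decouple the last $s$ choices from the conditioning on the state at time $r-s$.

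The main obstacle lies in the upper bound. Write $r = hs + \ell$ with $0 \le \ell < s$, fix a constant $C_0$, and define $\nu_i(t)$ to be the number of bins holding at least $h + C_0 + i$ balls at time $t$. I would then set up a layered induction showing that w.h.p., $\max_{t \le r} \nu_i(t) \le s \beta_i$, where $\beta_0 \le 1/e$ and $\beta_{i+1} \lesssim \beta_i^d$. Once $s\beta_i < 1$ the layer must be empty for every $t$, which forces $i = O(\log_d \log s)$ and yields $h + C_0 + i = r/s + \log_d \log s + O(1)$. The key structural fact at each layer is that a new ball raises the load past level $h + C_0 + i + 1$ only if all $d$ of its random bin choices already lie at level $\ge h + C_0 + i$, an event of conditional probability at most $(\nu_i(t)/s)^d$.

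The essential difficulty, absent in the light case treated by Theorem \ref{ABKUTher}, is that $r$ may be exponentially large in $s$, so a naive union bound over the time horizon $t \le r$ is catastrophic. I would overcome this with a short-memory / drift argument in the spirit of Berenbrink et al.: whenever $\nu_i(t) > s\beta_i$, the expected number of balls reaching level $h+C_0+i+1$ per step is so small that the layer profile $\nu_i(\cdot)$ is stochastically dominated by a Markov chain with a strong negative drift back to the threshold. Concentration for this dominating chain, via Azuma or a coupling to sums of independent indicators, gives the required time-uniform high-probability bound on $\max_{t\le r} \nu_i(t)$. Chaining the induction from $i=0$ up to $i^{*} = O(\log_d \log s)$ then closes the upper bound and matches the lower bound up to the additive constant $C$.
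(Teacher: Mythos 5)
Theorem \ref{BerenTher1} is a cited result of Berenbrink, Czumaj, Steger, and V\"ocking \cite{Beren:00a}; the paper quotes it without proof and uses it as a black box in Theorems \ref{localLinearTher} and \ref{decideTher}, so there is no in-paper proof to compare against. Your high-level picture --- the $r/s$ term by averaging, a $\log_d\log s$ excess term via layered induction, with the essential difficulty being that $r$ may be super-polynomial in $s$ --- is the right one, but the mechanism you propose for getting around that difficulty has a genuine flaw. You define $\nu_i(t)$ as the number of bins whose \emph{absolute} load is at least $h + C_0 + i$, with $h=\lfloor r/s\rfloor$ a fixed final quantity. Since balls are never removed, each $\nu_i(t)$ is monotone non-decreasing in $t$, so $\max_{t\le r}\nu_i(t)=\nu_i(r)$ and there is no event in which $\nu_i$ ``drifts back to the threshold.'' The dominating Markov chain you describe simply does not have negative drift on the quantity you defined, and the naive induction closes on an expectation of order $r\beta_i^d$, not $s\beta_i^d$. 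What actually has self-correcting behavior is the number of bins whose load exceeds the \emph{running} average $t/s$ by $i$: as $t$ grows the reference level climbs, so a bin can fall out of a layer. Berenbrink et al.\ track precisely that sort of relative quantity; moreover their central tool is not a drift/Azuma bound at all but a ``short memory'' coupling lemma --- two copies of \textsc{GreedyMC} started from arbitrary load vectors of the same total load couple within polynomially many (in $s$) steps, w.h.p.\ --- which reduces the analysis at time $r$ to a polynomial-length window that can then be handled by a light-case layered induction. That coupling lemma is the missing ingredient in your upper-bound plan.

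Your lower bound also has a gap. Averaging gives some bin of load $\ge\lfloor(r-s)/s\rfloor$ at time $r-s$, and the light-case lower bound from Theorem \ref{ABKUTher} gives some bin receiving $\log_d\log s - O(1)$ of the last $s$ balls; but these need not be the same bin. Worse, greedy preferentially sends the last $s$ balls into lightly loaded bins, so the bin that collects many new balls is precisely the one likely to have been well below $(r-s)/s$ at time $r-s$. To close your reduction you would need the complementary concentration fact that the \emph{minimum} load at time $r-s$ is already $(r-s)/s - O(\log\log s)$, and that is itself part of what Berenbrink et al.\ prove, not something you may assume.
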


Theorem \ref{BerenTher1} is a crucial result that we have used to derive our
results, see Theorems \ref{localLinearTher} and \ref{decideTher}. It states
that the deviation from the average bin load which is $\log_d \log s$ stays
unchanged as the number of balls increases.

V\"ocking \cite{Vocking:99, Vocking:01} demonstrated that it is possible to
improve the performance of the greedy process, if non-uniform distributions
on the bins and a tie-breaking rule are carefully chosen. He suggested the
following variant which is called \emph{Always-Go-Left}. The bins are
numbered from 1 to $n$. We partition the $s$ bins into $d$ groups of almost
equal size, that is, each group has size $\Theta(s/d)$. We allow each ball
to select upon arrival $d$ bins independently at random, but the $i$-th bin
must be chosen uniformly from the $i$-th group. Each ball is placed on-line,
as before, in the least full bin, but upon a tie, the ball is always placed
in the leftmost bin among the $d$ bins. We shall write
\textsc{LeftMC}$(s,r,d)$ to refer to this process. V\"ocking
\cite{Vocking:99} showed that if $r = \Theta(s)$, the maximum load of
\textsc{LeftMC}$(s,r,d)$ is $\log \log s / (d \log \phi_d) + O(1)$, w.h.p.,
where $\phi_d$ is a constant related to a generalized Fibonacci sequence.
For example, the constant $\phi_2= 1.61...$ corresponds to the well-known
golden ratio, and $\phi_3 = 1.83...$. In general, $\phi_2< \phi_3 < \phi_4 <
\cdots < 2$, and $\lim_{d \to \infty} \phi_d= 2$. Observe the improvement on
the performance of \textsc{GreedyMc}$(s,r,d)$, even for $d=2$. The maximum
load of \textsc{LeftMC}$(s,r,2)$ is $0.72... \times \log_2 \log s + O(1)$,
whereas in \textsc{GreedyMC}$(s,r,2)$, it is $\log_2 \log s + O(1)$. The
process \textsc{LeftMC}$(s,r,d)$ is also optimal in the following sense.

\begin{Ther}[V\"ocking \cite{Vocking:99}] \label{VockingTher}
Let $r, s, d \in \mathbb{N}$, where $d\geq 2$, and $r=\Theta(s)$. The
maximum bin load of of \textsc{LeftMC}$(s,r,d)$ upon termination is $\log
\log s / (d \log \phi_d) \pm O(1)$, w.h.p. Moreover, the maximum bin load of
any on-line allocation process that inserts $r$ balls sequentially into $s$
bins where each ball is placed into a bin among $d$ bins chosen according to
arbitrary, not necessarily independent, probability distributions defined on
the bins is at least $\log \log s / (d \log \phi_d) - O(1)$, w.h.p.
\end{Ther}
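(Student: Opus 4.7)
The plan is to adapt the layered-induction (a.k.a.\ witness-tree) method of Azar et al.\ to the asymmetric setting of \textsc{LeftMC}. For the upper bound, label the $d$ groups $1,\ldots,d$ and for each group $j$ and each height $i$, let $\nu_i^{(j)}$ be the number of bins in group $j$ whose load reaches $i$ at some point during the process. A ball can raise a group-$j$ bin from height $i$ to height $i+1$ only if its $j$-th choice lands on that bin and \emph{none} of its other choices lies in a strictly lighter bin; because ties are broken always to the left, this forces the $k$-th choice to sit at height $\geq i$ for all $k>j$ and at height $\geq i+1$ for all $k<j$. Taking expectations and using the independence of the $d$ choices within their respective groups yields, roughly,
\begin{equation*}
  \Exp{\nu_{i+1}^{(j)}} \;\leq\; \frac{c}{s}\,\Exp{\nu_{i+1}^{(1)}\cdots\nu_{i+1}^{(j-1)}\,\nu_i^{(j)}\,\nu_i^{(j+1)}\cdots\nu_i^{(d)}},
\end{equation*}
and, after the usual concentration step that turns these expectation bounds into high-probability bounds (standard Chernoff/coupling as in Azar et al.), the deterministic majorants $a_i^{(j)}$ obey the same product recurrence.

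The combinatorial heart of the proof is then to solve this recurrence. One checks that setting $a_i^{(j)} = s/F_d(i,j)^{\kappa}$, where $F_d(i,j)$ is a $d$-dimensional generalized Fibonacci array defined by $F_d(i,j) = \prod$ of the appropriately shifted entries, leads to a characteristic equation whose dominant root is the V\"ocking constant $\phi_d$. Iterating the recurrence therefore gives $a_i^{(j)}$ decaying like $s\,\phi_d^{-d i}$ rather than the $s\,2^{-d^i}$ produced by \textsc{GreedyMC}, and the first $i$ with $a_i^{(j)}<1$ is $i = \log\log s /(d \log \phi_d) + O(1)$. Combining this with Theorem \ref{ABKUTher}'s extension technique to the regime $r=\Theta(s)$ (tracking the evolution from an initially empty table) delivers the upper bound $\log \log s /(d \log \phi_d)+ O(1)$.

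For the matching lower bound I would run the witness-tree argument in reverse. Assume a bin reaches load $\ell$; backtrack through the $\ell$ balls that settled there to build a tree of height $\ell$ whose branching structure is forced by the allocation rule (each ball had $d$ choices, all at least as loaded as the bin it chose). The key combinatorial observation, which works for \emph{any} on-line $d$-choice process regardless of the probability distributions used, is that pruning the witness tree to remove repeated vertices still leaves a tree whose number of leaves grows like $\phi_d^{d \ell}$ because the local branching factor, after accounting for the order in which choices were revealed, satisfies the same generalized Fibonacci recurrence as above. Since the leaves sit in distinct bins, $\phi_d^{d\ell} \leq s$ with high probability, so $\ell \geq \log \log s /(d\log\phi_d) - O(1)$.

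The main obstacle is the second paragraph: identifying the right $d$-dimensional recurrence and verifying that its solution grows at rate $\phi_d$ rather than $2$. The asymmetry of the always-go-left tie-breaking must be tracked through every step, and the concentration arguments that convert expectation bounds into whp statements are delicate because the $\nu_i^{(j)}$ are not independent. For the lower bound, the delicate point is ensuring that the argument goes through for arbitrary, possibly correlated, probability distributions on the bins, which requires a careful averaging over the adversarial choice of distributions rather than an explicit calculation for the uniform one.
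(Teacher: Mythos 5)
The paper does not prove Theorem \ref{VockingTher} at all: it is stated as a quoted result of V\"ocking \cite{Vocking:99} and is invoked later only as a black box (to prove Theorem \ref{universalLower2} by reduction, exactly parallel to the way Theorem \ref{ABKUTher} is used to prove Theorem \ref{universalLower1}). There is therefore no in-paper argument to compare your sketch against, and reproducing V\"ocking's proof is outside the scope of what the paper does with this statement.

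Taken on its own, your outline is a plausible summary of V\"ocking's two-sided argument (a per-group layered recurrence whose solution is governed by a generalized Fibonacci sequence for the upper bound; pruned witness trees for the distribution-free lower bound), but the closing step of your lower bound does not produce the stated rate. If the pruned witness tree had on the order of $\phi_d^{d\ell}$ distinct leaves, then the constraint $\phi_d^{d\ell}\leq s$ would only force $\ell = O\bigl(\log s/(d\log\phi_d)\bigr)$ --- a single-logarithmic bound, not the claimed $\log\log s/(d\log\phi_d)$. In V\"ocking's calculation the quantity of order $\phi_d^{d\ell}$ ends up compared against $\log s$, not $s$: the existence of a witness tree of height $\ell$ is established probabilistically via a first-/second-moment count over tree embeddings in which $\Theta(\phi_d^{d\ell})$ appears in the exponent of the occurrence probability, not via a deterministic ``distinct leaves fit in $s$ bins'' constraint. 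As written, the last sentence of your lower-bound paragraph is off by an exponential and would need to be reworked before the sketch could be turned into a proof.
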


Berenbrink et al.\ \cite{Beren:00a} studied the heavily loaded case and
recorded the following theorem.

\begin{Ther}[Berenbrink et al.\ \cite{Beren:00a}] \label{BerenbTher2}
There is a constant $A >0$ such that for any integers $r \geq s >0$,
and $d \geq 2$, the maximum bin load upon termination of the process
\textsc{LeftMC}$(s,r, d)$ is $\log \log s / (d \log \phi_d) + r/s
\pm A$, w.h.p.
\end{Ther}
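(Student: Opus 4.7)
The plan is to adapt V\"ocking's layered-induction / witness-tree argument from the proof of Theorem \ref{VockingTher} to the heavily loaded regime $r \geq s$, mirroring the approach Berenbrink et al.\ use for \textsc{GreedyMC} in Theorem \ref{BerenTher1}. The central observation is that once the average load $\lfloor r/s \rfloor$ is subtracted off, the \emph{excess profile} above the mean behaves, in a precise stochastic sense, like the load profile in a lightly loaded process, so the same Fibonacci base $\phi_d$ governs its tails.

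First I would introduce the \emph{height} of a ball: one plus the load of its chosen bin at the moment of insertion. The maximum bin load upon termination equals the maximum height ever attained, so it suffices to bound the latter. Writing $r(t)$ for the number of balls placed by time $t$ and $\nu_k(t)$ for the number of bins whose load exceeds $\lfloor r(t)/s \rfloor + k$, the goal becomes to show that $\nu_k(t)$ decays super-exponentially in $k$ at rate $\phi_d^d$, uniformly in $t$, and then to union-bound over $k$ and $t$.

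The key technical step is a layered induction establishing a recursion of the shape $\Exp{\nu_{k+1}(t+1) - \nu_{k+1}(t)} \lesssim \prod_{i=1}^{d} \nu_{k-i+1}(t)/s$, which encodes the Always-Go-Left tie-breaking together with the partition into $d$ groups: a new ball lands at excess height $k+1$ only if, for each $i$, its $i$-th choice falls into a bin of the $i$-th group whose current excess height is at least $k-i+1$. This asymmetric Fibonacci-type recursion produces the base $\phi_d$. In the lightly loaded setting this, combined with a union bound, directly yields the target bound; in the heavily loaded setting one additionally has to show that the excess profile $(\nu_k(t))_{k \geq 0}$ is ``stationary'' in a short-memory sense: starting any window of $s$ insertions from an arbitrary profile whose tails satisfy $\nu_k \leq s\,\phi_d^{-dk}$, one must recover the same bound at the end of the window. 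I would establish this by a Markov-chain / stochastic-dominance comparison between the profile at time $t+s$ and a lightly loaded profile, chaining these estimates across the $r/s$ phases to obtain the heavy-load extension of Theorem \ref{VockingTher} with an additive constant $A$ independent of $r$ and $s$.

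For the lower bound, I would invoke the lower bound half of Theorem \ref{VockingTher} on the first $s$ insertions to obtain, w.h.p., a bin of load at least $\log\log s/(d\log\phi_d) - O(1)$; the ensuing $r-s$ balls can only push some bin's load up by at least $(r-s)/s$, by a straightforward pigeonhole / coupling comparison to a balls-in-bins process with no multiple-choice advantage. The hard part is the short-memory mixing step in the upper bound: controlling the dependencies between $\nu_k(t)$ across time without letting $A$ grow with $r$ or $s$ is the delicate point where the heavy-load argument genuinely departs from V\"ocking's lightly loaded analysis, and it is precisely where Berenbrink et al.'s Markov-chain viewpoint on the excess profile becomes essential.
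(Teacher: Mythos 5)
The paper does not contain a proof of this theorem; it is stated and cited directly from Berenbrink et al.\ \cite{Beren:00a}, so there is no ``paper's own proof'' to compare against. Your sketch does roughly track the strategy of the cited reference for the upper bound: tracking the \emph{excess} load above $\lfloor r/s\rfloor$ via the tail counts $\nu_k(t)$, feeding these into V\"ocking's asymmetric $d$-fold recursion to get the $\phi_d^{\,d}$-rate decay, and then invoking a short-memory / stationarity argument for the profile Markov chain to push the lightly loaded bound through $\Theta(r/s)$ phases of $s$ insertions each. You correctly identify the short-memory step as the crux, and it is fair to flag that step as a black box rather than work it out.

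The lower-bound sketch, however, has a genuine gap. You propose to run the first $s$ insertions to find a bin of load $\log\log s/(d\log\phi_d)-O(1)$ and then argue that the remaining $r-s$ balls raise \emph{some} bin's load by an additional $(r-s)/s$ ``by pigeonhole.'' Pigeonhole only yields that the maximum load at the end is at least the average $r/s$; it says nothing about adding the two quantities. The specific bin that was high after the first $s$ insertions will typically receive almost none of the subsequent balls (the multiple-choice rule actively avoids it), so there is no reason its load should grow by $(r-s)/s$. To get the stated additive form $r/s+\log\log s/(d\log\phi_d)-O(1)$ you instead need an argument centered on the \emph{final} $\Theta(s)$ insertions: use the upper-bound machinery (or an anti-concentration estimate) to show that at time $r-\Theta(s)$ the load profile is nearly flat around its mean, and then apply V\"ocking's lightly loaded lower bound to the last phase, started from that nearly flat profile. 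That is the structure of Berenbrink et al.'s lower-bound proof, and it is not a ``straightforward pigeonhole'' but a coupling argument that itself depends on the short-memory property you invoked for the upper bound.
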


For other variants and generalizations of the multiple-choice
paradigm see \cite{Adler:98, Adler:95, Beren:00b, Czumaj:01,
Mitzen:99a, Stemann:96}. The paradigm has been used to derive many
applications, e.g., in load balancing, circuit routing, IP address
lookups, and computer graphics \cite{Broder:01, Mitzen:96,
Mitzen:99b, Mitzen:00, Wu:02}.

\section{Life is not Always Good!}\label{lowerSec}

We prove here that the idea of two-way linear probing alone is not always sufficient to pull off a
plausible hashing performance. We prove that a large group of two-way linear probing algorithms
have an $\Omega(\log n)$ lower bound on their worst-case search time. To avoid any ambiguity, we
consider this definition.

\begin{Def}\label{def2wlinear}
A two-way linear probing algorithm is an open addressing hashing algorithm that inserts keys into
cells using a certain strategy and does the following upon the arrival of each key:
\begin{enumerate}
  \item It chooses two initial hashing cells independently and uniformly
at random, with replacement.
  \item Two terminal (empty) cells are then found by linear probe sequences
starting from the initial cells.
  \item The key is inserted into one of these terminal cells.
\end{enumerate}
\end{Def}

To be clear, we give two examples of inefficient two-way linear probing algorithms. Our first
algorithm places each key into the terminal cell discovered by the shorter probe sequence. More
precisely, once the key chooses its initial hashing cells, we start two linear probe sequences. We
proceed, sequentially and alternately, one probe from each sequence until we find an empty
(terminal) cell where we insert the key. Formally, let $f, \, g :\mathcal{U} \to \set{0, \ldots,
n-1}$ be independent and truly uniform hash functions. For $x \in \mathcal{U}$, define the linear
sequence $f_1(x) = f(x)$, and $f_{i+1}(x) = f_i(x) + 1 \mod n$, for $i \in \intSet{n}$; and
similarly define the sequence $g_i(x)$. The algorithm, then, inserts each key $x$ into the first
unoccupied cell in the following probe sequence: $f_1(x), \, g_1(x), \, f_2(x), \, g_2(x), \,
f_3(x), \, g_3(x), \ldots$. We denote this algorithm that hashes $m$ keys into $n$ cells by
\textsc{ShortSeq}$(n,m)$, for the shorter sequence.

\begin{figure}[htbp]
\begin{center}
  \scalebox{1}{\input{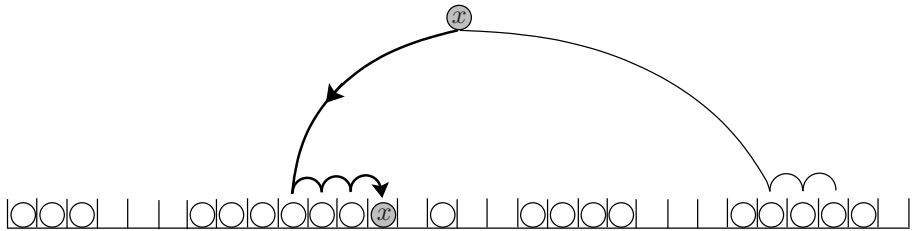}}
  \caption{An illustration of algorithm \textsc{ShortSeq}$(n,m)$ in terms of balls (keys)
and bins (cells). Each ball is inserted into the empty bin found by the shorter
sequence.}\label{figShortSeq}
\end{center}
\end{figure}

The second algorithm inserts each key into the empty (terminal) cell that is the right neighbor of
the smaller cluster among the two clusters containing the initial hashing cells, breaking ties
randomly. If one of the initial cells is empty, then the key is inserted into it, and if both of
the initial cells are empty, we break ties evenly. Recall that a cluster is a group of
consecutively occupied cells whose left and right neighbors are empty cells. This means that one
can compute the size of the cluster that contains an initial hashing cell by running two linear
probe sequences in opposite directions starting from the initial cell and going to the empty cells
at the boundaries. So practically, the algorithm uses four linear probe sequences. We refer to this
algorithm by \textsc{SmallCluster}$(n,m)$ for inserting $m$ keys into $n$ cells.

\begin{figure}[htbp]
\begin{center}
  \scalebox{1}{\input{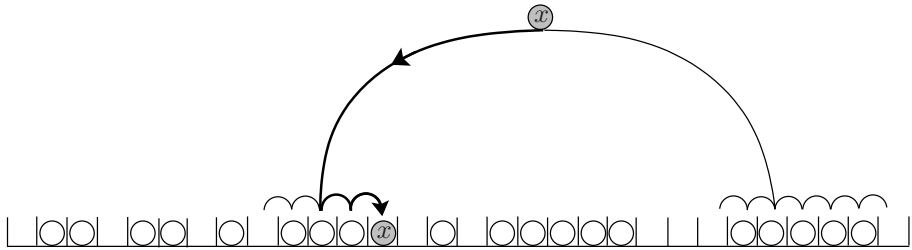}}
  \caption{Algorithm \textsc{SmallCluster}$(n,m)$ inserts each key into
the empty cell adjacent to the smaller cluster, breaking ties randomly. The size of the clusters is
determined by probing linearly in both directions.}\label{figSmallCluster}
\end{center}
\end{figure}

Before we show that these algorithms produce large clusters, we shall record a lower bound that
holds for any two-way linear probing algorithm.

\subsection{Universal Lower Bound}

The following lower bound holds for any two-way linear probing hashing
scheme, in particular, the ones that are presented in this article.

\begin{Ther}\label{universalLower1}
Let $n \in \mathbb{N}$, and  $m=\floor{\alpha n}$, where $\alpha \in (0,1)$
is a constant. Let $\mathbf{A}$ be any two-way linear probing algorithm that
inserts $m$ keys into a hash table of size $n$. Then upon termination of
$\mathbf{A}$, w.h.p., the table contains a cluster of size of at least
$\log_2 \log n - O(1)$.
\end{Ther}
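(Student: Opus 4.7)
The plan is to reduce the problem to a two-choice balls-in-bins allocation and invoke the universal lower bound of Azar et al.\ (Theorem \ref{ABKUTher}). Regard each of the $n$ cells as a bin and each of the $m = \floor{\alpha n}$ keys as a ball. By Definition \ref{def2wlinear}, every key draws two initial cells independently and uniformly at random and is then placed by $\mathbf{A}$ at one of the two terminal cells produced by the corresponding linear probe sequences. Associate to each key a \emph{used} initial cell, namely the one whose probe sequence ends at the terminal cell where $\mathbf{A}$ actually inserts the key (breaking ties arbitrarily, say by taking the initial cell of smaller index, when both probes end at the same cell). This assignment is performed on-line as the keys arrive, so it defines a valid on-line $d = 2$ allocation process in which each ball selects two bins uniformly at random, with replacement. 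Since $m = \Theta(n)$, Theorem \ref{ABKUTher} yields a cell $c^*$ that serves as the used initial cell for at least $k = \log_2 \log n - O(1)$ keys, w.h.p.

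It remains to show that $k$ keys sharing a used initial cell $c^*$ force a cluster of size at least $k$. Enumerate those keys in insertion order as $y_1, \ldots, y_k$, and let $p_1, \ldots, p_k$ denote their final positions, read cyclically starting at $c^*$. When $y_j$ is inserted, \textsc{fcfs} linear probing from $c^*$ halts at the first empty cell in the cyclic sequence $c^*, c^*+1, c^*+2, \ldots$, so every cell strictly between $c^*$ and $p_j$ is already occupied at that moment; hence the entire cyclic interval $[c^*, p_j]$ is occupied right after $y_j$ is placed. In particular, immediately after $y_k$ is inserted, the cells $c^*, c^*+1, \ldots, p_k$ form a contiguous run of occupied cells of length $p_k - c^* + 1 \geq k$, since $y_1, \ldots, y_k$ occupy $k$ distinct cells of this run. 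Open-addressing insertions only occupy cells and never vacate them, so this run persists and sits inside a cluster of size at least $k$ at termination.

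The main obstacle is conceptual rather than calculational: recognizing that, despite the long-range interactions linear probing creates across clusters, the single statistic ``which of the two initial cells was used'' gives a faithful embedding of every two-way linear probing scheme into an on-line two-choice allocation. Once this reduction is in place, the ABKU lower bound together with the elementary monotonicity argument above suffices. The stronger V\"ocking-style lower bound (Theorem \ref{VockingTher}) is not required here, because the initial cells in Definition \ref{def2wlinear} are drawn uniformly rather than from arbitrary distributions, and the bound $\log_2 \log n - O(1)$ from Theorem \ref{ABKUTher} already matches the target.
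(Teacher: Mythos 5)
Your proposal is correct and mirrors the paper's proof: both define an auxiliary two-choice allocation by charging each key to the initial cell from which the successful probe sequence started, invoke the Azar et al.\ lower bound (Theorem \ref{ABKUTher}) to get a bin with $\log_2\log n - O(1)$ balls, and observe that keys charged to a common initial cell must lie in one cluster. Your proposal simply spells out the monotonicity argument for that last step, which the paper states without elaboration.
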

\begin{proof}
Imagine that we have a bin associated with each cell in the hash table.
Recall that for each key $x$, algorithm $\mathbf{A}$ chooses two initial
cells, and hence two bins, independently and uniformly at random, with
replacement. Algorithm $\mathbf{A}$, then, probes linearly to find two
(possibly identical) terminal cells, and inserts the key $x$ into one of
them. Now imagine that after the insertion of each key $x$, we also insert a
ball into the bin associated with the initial cell from which the algorithm
started probing to reach the terminal cell into which the key $x$ was
placed. If both of the initial cells lead to the same terminal cell, then we
break the tie randomly. Clearly, if there is a bin with $k$ balls, then
there is a cluster of size of at least $k$, because the $k$ balls represent
$k$ distinct keys that belong to the same cluster. However, Theorem
\ref{ABKUTher} asserts that the maximum bin load upon termination of
algorithm $\mathbf{A}$ is at least $\log_2 \log n - O(1)$, w.h.p.
\end{proof}

The above lower bound is valid for all algorithms that satisfy
Definition \ref{def2wlinear}. A more general lower bound can be
established on all open addressing schemes that use two linear probe
sequences where the initial hashing cells are chosen according to
some (not necessarily uniform or independent) probability
distributions defined on the cells. We still assume that the probe
sequences are used to find two (empty) terminal hashing cells, and
the key is inserted into one of them according to some strategy. We
call such schemes \emph{nonuniform two-way linear probing}. The
proof of the following theorem is basically similar to Theorem
\ref{universalLower1}, but by using instead V\"ocking's lower bound
as stated in Theorem \ref{VockingTher}.

\begin{Ther}\label{universalLower2}
Let $n \in \mathbb{N}$, and  $m=\floor{\alpha n}$, where $\alpha \in (0,1)$
is a constant.  Let $\mathbf{A}$ be any nonuniform two-way linear probing
algorithm that inserts $m$ keys into a hash table of size $n$ where the
initial hashing cells are chosen according to some probability
distributions. Then the maximum cluster size produced by $\mathbf{A}$, upon
termination, is at least $\log \log n / (2 \log \phi_2) - O(1)$, w.h.p.
\end{Ther}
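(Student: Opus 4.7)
The plan is to recycle the reduction used in the proof of Theorem \ref{universalLower1} verbatim and substitute V\"ocking's universal lower bound (Theorem \ref{VockingTher}) for the ABKU bound invoked there. I would associate a companion bin with every cell of the hash table and, whenever $\mathbf{A}$ inserts a key $x$, drop a ball into the bin attached to the initial cell whose linear probe actually reached the terminal cell chosen for $x$; when both initial cells lead to the same terminal cell, I break the tie uniformly at random. This produces an on-line allocation of $m = \Theta(n)$ balls into $n$ bins in which, at every step, the two candidate bins are drawn according to exactly the (arbitrary, possibly dependent) probability distributions that $\mathbf{A}$ uses to choose the initial cells.

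Next, I would apply Theorem \ref{VockingTher} with $d=2$, $s=n$, $r=m$. The hypotheses fit exactly: V\"ocking's result is stated precisely for on-line allocation processes in which the $d$ candidate bins may be chosen from arbitrary, not necessarily independent, probability distributions on the bins. The theorem then guarantees a bin of load at least $\log \log n / (2 \log \phi_2) - O(1)$, w.h.p.

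The final step is to translate the bin-load lower bound into a cluster-size lower bound, exactly as in the proof of Theorem \ref{universalLower1}. If bin $c$ receives $k$ balls, then there are $k$ distinct keys stored at cells $t_1, \ldots, t_k$, each reached by a linear probe that started at $c$; since keys are never removed in this insertion-only \textsc{fcfs} setting, at the end of construction the cyclic stretch from $c$ to the rightmost $t_i$ is entirely occupied, so all of $t_1, \ldots, t_k$ belong to a single cluster of size at least $k$. Combined with the previous step, this yields the claimed bound.

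The only delicate point is the tie-breaking in the reduction, namely the case where both initial cells produce the same terminal cell. Since Theorem \ref{VockingTher} is valid for \emph{any} on-line attribution rule that assigns the ball to one of the two chosen bins, a uniformly random choice is harmless and preserves the applicability of V\"ocking's lower bound. So there is no real obstacle beyond verifying these bookkeeping details, and the proof is essentially a direct transcription of the argument for Theorem \ref{universalLower1} with the stronger lower bound in place of the ABKU bound.
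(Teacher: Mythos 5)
Your proposal is correct and matches the paper's own argument exactly: the paper explicitly states that the proof of Theorem~\ref{universalLower2} is ``basically similar to Theorem~\ref{universalLower1}, but by using instead V\"ocking's lower bound as stated in Theorem~\ref{VockingTher},'' which is precisely the substitution you carry out, with the same bin-per-cell reduction and the same bin-load-to-cluster-size translation.
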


\subsection{Algorithms that Behave Poorly} \label{badAlgSec}

We characterize some of the inefficient two-way linear probing algorithms.
Notice that the main mistake in algorithms \textsc{ShortSeq}$(n,m)$ and
\textsc{SmallCluster}$(n,m)$ is that the keys are allowed to be inserted
into empty cells even if these cells are very close to some giant clusters.
This leads us to the following theorem.

\begin{Ther} \label{emptyCellsTher}
Let $\alpha \in (0,1)$ be constant. Let $\mathbf{A}$ be a two-way
linear probing algorithm that inserts $m=\floor{\alpha n}$ keys into
$n$ cells such that whenever a key chooses an empty and an occupied
initial cells, the algorithm inserts the key into the empty one. Then
algorithm $\mathbf{A}$ produces a giant cluster of size $\Omega(\log
n)$, w.h.p.
\end{Ther}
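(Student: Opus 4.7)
My plan is to exploit the fact that under the hypothesis, whenever at least one of a key's two initial cells is empty, the key is placed directly into one of its initial cells---with no linear probing---and in particular must land in the unique empty one if only one is empty. This loss of the \emph{power of two choices} whenever the state is not fully saturated is what prevents the algorithm from suppressing large clusters, and reduces the dynamics to something much closer to classical one-way linear probing.

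First, I would restrict attention to an early phase of the insertion process, during which the table density is bounded by some $\alpha_0 \in (0,\alpha)$ chosen small. In this phase the probability that both initial cells of an arriving key are occupied is at most $\alpha_0^2$; by a Chernoff bound, all but an $O(\alpha_0^2)$-fraction of the first $m_0 = \alpha_0 n$ keys arrive with at least one empty initial cell, and so by the hypothesis each such key is placed at one of its two initial cells. Hence, up to an $O(\alpha_0^2 n)$ exceptional set arising from the algorithm's linear probing in the both-occupied case, the placement during the prefix coincides with a one-choice-like random placement, and the table state can be compared with that of classical one-way linear probing at load $\alpha_0$.

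Second, I would imitate Pittel's argument to locate an interval of length $\ell = c \log n$ that becomes saturated during the prefix. Fix such an interval $I$ and consider the deviation event that the number of keys whose placement is forced into $I$ by the hypothesis exceeds the empty cells initially available in $I$. A Chernoff estimate gives this event probability at least $n^{-c'}$ for some $c' < 1$ depending on $\alpha_0$, and a union bound over the $n$ candidate translates of $I$ then produces, with probability tending to one, at least one translate that saturates---yielding a cluster of size at least $\ell = \Omega(\log n)$.

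The main obstacle is controlling the algorithm's freedom in the two situations where the hypothesis is silent: when both initial cells are empty, and when both are occupied. The former merely biases the choice between two valid initial-cell targets and does not obstruct cluster formation. The latter can divert a key through linear probing to a terminal cell far from its initial cells, potentially draining the target interval. I expect the technical heart of the proof to be the quantitative argument that this diversion is too rare---occurring with probability $O(\alpha_0^2)$ per key in the sparse prefix---to overwhelm the Chernoff-scale fluctuations that drive saturation, so that the predicted cluster persists despite the adversary's choices in the unconstrained cases.
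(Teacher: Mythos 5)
Your approach diverges fundamentally from the paper's, and it has a real gap at exactly the point you flag as the ``technical heart.'' You dismiss the both-empty case as ``merely biasing the choice between two valid initial-cell targets,'' but in the sparse prefix (density $\alpha_0$) this is the \emph{typical} case: a fraction $(1-\alpha_0)^2$ of keys have both initial cells empty, and for each of these the algorithm has a completely free, adversarial choice of which initial cell to take. An adversary can exploit this freedom to systematically dodge any fixed target interval $I$ (e.g., whenever one initial cell lies in $I$ and the other outside, always pick the outside one). So the reduction to ``one-choice-like random placement'' does not hold, and your Pittel-style Chernoff/union-bound argument over translates of $I$ has no control over the vast majority of insertions. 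You also cannot easily salvage this by counting only ``forced'' keys (exactly one empty initial cell): whether a key is forced into a specific cell of $I$ depends on the occupancy state at its insertion time, which in turn depends on all previous adversarial choices, so the indicators you want to sum are not i.i.d.\ or even exchangeable, and a naive Chernoff bound does not apply.

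The paper's proof addresses exactly this difficulty with a different device. Call a cell $j$ \emph{covered} if some key has $j$ as its \emph{first} initial cell and its \emph{second} initial cell is occupied at that key's insertion time; the hypothesis then forces that key into $j$ whenever $j$ is still empty, so a covered cell is necessarily occupied at the end, regardless of the adversary's choices on all other keys. Partitioning the table into blocks of size $\beta = \Theta(\log n)$, the paper shows that some block has \emph{all} its cells covered (hence is fully occupied) w.h.p. Two technical ingredients make the probability estimate go through: negative association of the block-covering indicators (via the zero-one lemma and closure under disjoint monotone functions), which gives $\prob{\text{no block covered}} \le \exp(-N\prob{Y_1=1})$; and a conditioning on an event $A$ under which each key's first initial cell is either the current ``next uncovered cell'' of block 1 or outside block 1, so that the cells are covered \emph{in order} and the state-dependence collapses to a tractable binomial tail. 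Your sketch has nothing playing the role of ``coveredness,'' which is what neutralizes the adversary, nor anything playing the role of the ordering event $A$, which is what tames the state-dependence. Without some analogue of both, the proposal does not go through.
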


To prove the theorem, we need to recall the following.

\begin{Def}[See, e.g., \cite{Dubhashi:98}]
Any non-negative random variables $X_1, \ldots, X_n$ are said to be
negatively associated, if for every disjoint index subsets $I, J
\subseteq \intSet{n}$, and for any functions $f :
\mathbb{R}^{\abs{I}} \to \mathbb{R}$, and $g : \mathbb{R}^{\abs{J}}
\to \mathbb{R}$ that are both non-decreasing or both non-increasing
(componentwise), we have
\[ \Exp{f(X_i, i \in I) \, g(X_j, j \in J) }
        \leq \Exp{f(X_i, i \in I) } \,
                     \Exp{g(X_j, j \in J) } \, .    \]
\end{Def}

Once we establish that $X_1, \ldots, X_n$ are negatively associated, it
follows, by considering inductively the indicator functions, that
\[ \prob{X_1 < x_1, \ldots, X_n, < x_n  } \leq \prod_{i=1}^n \prob{X_i < x_i } \, . \]
The next lemmas, which are proved in \cite{Dubhashi:98, Esary:67,
Joag:83}, provide some tools for establishing the negative
association.

\begin{Lem}[Zero-One Lemma] \label{lemNA1}
Any binary random variables $X_1, \ldots, X_n$ whose sum is one are
negatively associated.
\end{Lem}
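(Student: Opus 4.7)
The plan is to exploit the structure that $\sum_k X_k = 1$ forces on the joint law: exactly one coordinate equals $1$, so on the disjoint index sets $I$ and $J$, at most one $X_i$ with $i \in I$ is nonzero, at most one $X_j$ with $j \in J$ is nonzero, and these two events are mutually exclusive. This mutual exclusion is the source of the negative dependence, and it should let me reduce $f$ and $g$ to affine functions almost surely.

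First I would set $p_k = \prob{X_k = 1}$, let $A = f(0,\ldots,0)$ (all $X_i$ with $i \in I$ equal to zero), let $a_i = f(e_i) - A$ where $e_i$ is the indicator vector of $\{i\}$ in $I$, and similarly define $B$ and $b_j$ for $j \in J$. Monotonicity of $f$ and $g$ in the non-decreasing case gives $a_i, b_j \geq 0$. The key observation is that on the support of $(X_i)_{i \in I}$, at most one coordinate is $1$, so
\[ f(X_i, i \in I) = A + \sum_{i \in I} a_i X_i \quad \text{a.s.,} \]
and likewise $g(X_j, j \in J) = B + \sum_{j \in J} b_j X_j$ a.s. In other words, the multilinear extension of $f$ collapses to its affine part on the support, since any quadratic monomial $X_i X_{i'}$ with $i \neq i'$ in $I$ vanishes almost surely.

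Next I would multiply these two affine expressions and use $X_i X_j = 0$ a.s.\ for $i \in I$, $j \in J$ (disjointness plus the fact that at most one coordinate is $1$). Taking expectations gives
\[ \Exp{fg} = AB + B \sum_{i \in I} a_i p_i + A \sum_{j \in J} b_j p_j, \]
while $\Exp{f}\Exp{g} = \pran{A + \sum_i a_i p_i}\pran{B + \sum_j b_j p_j}$. Subtracting,
\[ \Exp{f}\Exp{g} - \Exp{fg} = \pran{\sum_{i \in I} a_i p_i}\pran{\sum_{j \in J} b_j p_j} \geq 0, \]
which is the desired inequality. For the case where $f$ and $g$ are both non-increasing, I would apply the argument to $-f$ and $-g$, which are non-decreasing; the inequality is unchanged since both sides pick up the same sign.

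The only subtle step is the affine representation, and it is not really an obstacle as long as one justifies that we only need $f$ and $g$ evaluated on the (finite) support of the random inputs, not globally on $\{0,1\}^I$ and $\{0,1\}^J$. Once that is noted, everything else is a short algebraic computation, so I do not expect any serious technical difficulty.
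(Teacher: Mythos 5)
Your proof is correct, and the paper does not actually give its own argument for this lemma—it cites Dubhashi--Ranjan, Esary--Proschan--Walkup, and Joag-Dev--Proschan—so there is no in-paper proof to diverge from. Your argument is the standard one: on the event $\sum_k X_k = 1$, the vector $(X_i)_{i \in I}$ is supported on $\{0\} \cup \{e_i : i \in I\}$, which makes $f$ affine on the support with nonnegative slopes $a_i = f(e_i) - f(0)$; disjointness of $I$ and $J$ then kills every cross term $X_i X_j$, and the covariance collapses to $-\bigl(\sum_i a_i p_i\bigr)\bigl(\sum_j b_j p_j\bigr) \leq 0$. The only cosmetic difference from the usual write-up (e.g.\ in Dubhashi--Ranjan) is that they normalize to $f(0)=g(0)=0$ first, after which $fg = 0$ a.s.\ and the inequality reads $0 = \Exp{fg} \leq \Exp{f}\Exp{g}$; you keep the constants $A,B$ and do a slightly longer algebraic cancellation, but the idea is identical.
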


\begin{Lem} \label{lemNA2}
If $\set{X_1, \ldots, X_n}$ and $\set{Y_1, \ldots, Y_m}$ are independent
sets of negatively associated random variables, then the union $\set{X_1,
\ldots, X_n, Y_1, \ldots, Y_m}$ is also a set of negatively associated
random variables.
\end{Lem}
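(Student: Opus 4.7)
The plan is to reduce the joint negative association to two applications of the hypothesis, one for the $X$'s and one for the $Y$'s, by conditioning. Let $Z_1,\ldots,Z_{n+m}$ denote the combined sequence, and fix disjoint index sets $I,J\subseteq \intSet{n+m}$ together with componentwise monotone (say non-decreasing) functions $f,g$. Split $I=I_X\sqcup I_Y$ and $J=J_X\sqcup J_Y$ according to whether an index comes from the $X$-block or the $Y$-block. Then $I_X\cap J_X=\emptyset$ and $I_Y\cap J_Y=\emptyset$.

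First I would condition on the entire $Y$-vector. By the independence of the two blocks, the conditional distribution of $(X_1,\ldots,X_n)$ given $Y$ equals its unconditional distribution, and thus the $X_i$'s are still negatively associated under this conditioning. Viewed as functions of the $X$'s alone, with the $Y$'s frozen, $f(X_{I_X},Y_{I_Y})$ and $g(X_{J_X},Y_{J_Y})$ are non-decreasing in disjoint coordinates $X_{I_X}$ and $X_{J_X}$. Applying negative association of $\set{X_1,\ldots,X_n}$ to these conditional functions yields
\[
\Exp{fg \mid Y} \;\le\; \Exp{f \mid Y}\,\Exp{g \mid Y}.
\]

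Next I would define $\tilde f(Y)=\Exp{f(X_{I_X},Y_{I_Y}) \mid Y}$ and $\tilde g(Y)=\Exp{g(X_{J_X},Y_{J_Y}) \mid Y}$. These depend only on $Y_{I_Y}$ and $Y_{J_Y}$ respectively, and integrating a non-decreasing function against a distribution that does not depend on $Y$ preserves monotonicity, so $\tilde f$ and $\tilde g$ are non-decreasing in disjoint coordinates of the $Y$-block. Taking the outer expectation of the conditional inequality and applying negative association of $\set{Y_1,\ldots,Y_m}$ to $\tilde f,\tilde g$ gives
\[
\Exp{fg} \;\le\; \Exp{\tilde f(Y)\tilde g(Y)} \;\le\; \Exp{\tilde f(Y)}\,\Exp{\tilde g(Y)} \;=\; \Exp{f}\,\Exp{g},
\]
which is the required inequality. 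The non-increasing case is identical after replacing $f,g$ by $-f,-g$.

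The only subtle point, and what I would write most carefully, is the verification that after integrating out the $X$'s the resulting functions $\tilde f,\tilde g$ remain non-decreasing in the $Y$-coordinates on which they still depend; this follows because for fixed $X$-values the map $y\mapsto f(X_{I_X},y)$ is non-decreasing and expectation (against a measure that does not vary with $y$) preserves that order. Everything else is routine bookkeeping of index sets and conditional expectation.
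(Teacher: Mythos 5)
Your proof is correct. Note, however, that the paper does not prove this lemma; it simply cites it (to Dubhashi--Ranjan, Esary--Proschan--Walkup, and Joag-Dev--Proschan), so there is no in-paper argument to compare against. The conditioning argument you give is the standard one from those references: condition on $Y$, use independence to keep the $X$-block negatively associated under the conditioning, apply negative association of $X$ to the frozen-$Y$ functions, observe that the resulting conditional expectations $\tilde f,\tilde g$ are non-decreasing functions of the disjoint $Y$-index sets (precisely because the integrating measure does not vary with $Y$), and then apply negative association of $Y$. You correctly flag the one subtle point --- that integrating out $X$ preserves monotonicity in the surviving $Y$-coordinates --- and your justification of it is right. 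The reduction of the non-increasing case to the non-decreasing case via $f\mapsto -f$, $g\mapsto -g$ is also fine. Nothing is missing.
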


\begin{Lem} \label{lemNA3}
Suppose that $X_1, \ldots, X_n$ are negatively associated. Let $I_1,
\ldots, I_k \subseteq \intSet{n}$ be disjoint index subsets, for
some positive integer $k$. For $j \in \intSet{k}$, let $h_j :
\mathbb{R}^{\abs{I_j}} \to \mathbb{R}$ be non-decreasing functions,
and define $Z_j = h_j (X_i, i \in I_j)$. Then the random variables
$Z_1, \ldots, Z_k$ are negatively associated. In other words,
non-decreasing functions of disjoint subsets of negatively
associated random variables are also negatively associated. The same
holds if $h_j$ are non-increasing functions.
\end{Lem}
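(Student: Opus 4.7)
The plan is to reduce the negative-association inequality for $(Z_1,\dots,Z_k)$ directly to the negative-association inequality assumed for $(X_1,\dots,X_n)$, by composing test functions with the $h_j$'s.

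First, fix any two disjoint index sets $A, B \subseteq \intSet{k}$ and any two test functions $f: \mathbb{R}^{\abs{A}} \to \mathbb{R}$ and $g: \mathbb{R}^{\abs{B}} \to \mathbb{R}$ that are either both non-decreasing or both non-increasing componentwise. Our goal is to show
\[
\Exp{f(Z_j, j \in A)\, g(Z_j, j \in B)} \leq \Exp{f(Z_j, j \in A)} \, \Exp{g(Z_j, j \in B)}.
\]
Set $I_A := \bigcup_{j \in A} I_j$ and $I_B := \bigcup_{j \in B} I_j$. Since the $I_j$'s are pairwise disjoint and $A, B$ are disjoint, $I_A$ and $I_B$ are disjoint subsets of $\intSet{n}$. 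Define
\[
F(x_i, i \in I_A) := f\!\left( h_j(x_i, i \in I_j),\; j \in A \right), \quad
G(x_i, i \in I_B) := g\!\left( h_j(x_i, i \in I_j),\; j \in B \right),
\]
so that $F(X_i, i \in I_A) = f(Z_j, j \in A)$ and $G(X_i, i \in I_B) = g(Z_j, j \in B)$.

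Next, I would verify that $F$ and $G$ are simultaneously non-decreasing or simultaneously non-increasing componentwise in their $x_i$-arguments, so that the negative association of $(X_1,\dots,X_n)$ can be applied to them. This is a short case analysis. If all $h_j$ are non-decreasing, then a componentwise increase in the $x_i$'s produces a componentwise non-decrease in the vector $(h_j)_{j \in A}$, which in turn makes $f$ non-decrease (resp.\ non-increase) when $f$ is non-decreasing (resp.\ non-increasing); the same argument applies to $G$. If all $h_j$ are non-increasing, the direction flips once in each composition, so $F$ is non-increasing when $f$ is non-decreasing and non-decreasing when $f$ is non-increasing, and analogously for $G$. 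In every one of the four combinations, $F$ and $G$ end up with the same monotonicity direction.

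Finally, applying the negative association of $X_1, \dots, X_n$ to the pair $(F, G)$ on the disjoint index sets $I_A, I_B$ yields
\[
\Exp{F(X_i, i \in I_A)\, G(X_i, i \in I_B)} \leq \Exp{F(X_i, i \in I_A)} \, \Exp{G(X_i, i \in I_B)},
\]
which is exactly the desired inequality for $(Z_1, \dots, Z_k)$. Since $A$, $B$, $f$, $g$ were arbitrary (subject to the monotonicity condition in the definition of negative association), this establishes that $Z_1, \dots, Z_k$ are negatively associated, in both the non-decreasing and the non-increasing versions of the lemma. There is no genuine obstacle here; the only care needed is the brief monotonicity bookkeeping in the composition step, which is why I would make the four sign-combinations explicit in the write-up.
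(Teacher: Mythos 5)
Your argument is correct and is the standard proof of this closure property: compose the outer test functions with the $h_j$'s on the disjoint unions $I_A=\bigcup_{j\in A}I_j$ and $I_B=\bigcup_{j\in B}I_j$, check that the compositions share a common monotonicity direction in all four sign combinations, and invoke the negative association of $X_1,\dots,X_n$. The paper itself gives no proof of this lemma---it cites Dubhashi--Ranjan, Esary--Proschan--Walkup, and Joag-Dev--Proschan---and your write-up matches the argument found in those references, so there is nothing to flag beyond the cosmetic point that the paper's definition of negative association is stated for non-negative variables while the $h_j$ are allowed to be real-valued.
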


Throughout, we write $\binomial(n, p)$ to denote a binomial random
variable with parameters $n \in \mathbb{N}$ and $p \in [0,1]$.

\paragraph{Proof of Theorem \ref{emptyCellsTher}.}
Let $\beta = \floor{b \log_a n}$ for some positive constants $a$ and
$b$ to be defined later, and without loss of generality, assume that
$N := n/ \beta$ is an integer. Suppose that the hash table is
divided into $N$ disjoint blocks, each of size $\beta$. For $i \in
\intSet{N}$, let $\mathfrak{B}_i = \set{\beta( i-1)+1, \ldots, \beta
i}$ be the set of cells of the $i$-th block, where we consider the
cell numbers in a circular fashion. We say that a cell $j \in
\intSet{n}$ is ``covered" if there is a key whose first initial
hashing cell is the cell $j$ and its second initial hashing cell is
an occupied cell. A block is \emph{covered} if all of its cells are
covered. Observe that if a block is covered then it is fully
occupied. Thus, it suffices to show that there would be a covered
block, w.h.p.

For $i \in \intSet{N}$, let $Y_i$ be the indicator that the $i$-th
block is covered. The random variables $Y_1, \ldots, Y_N$ are
negatively associated which can been seen as follows. For $j \in
\intSet{n}$ and $t \in \intSet{m}$, let $X_j(t)$ be the indicator
that the $j$-th cell is covered by the $t$-th key, and set $X_0(t)
:= 1 - \sum_{j=1}^n X_j(t)$. Notice that the random variable
$X_0(t)$ is binary. The zero-one Lemma asserts that the binary
random variables $X_0(t), \ldots, X_n(t)$ are negatively associated.
However, since the keys choose their initial hashing cells
independently, the random variables $X_0(t), \ldots, X_n(t)$ are
mutually independent from the random variables $X_0(t'), \ldots,
X_n(t')$, for any distinct $t, t' \in \intSet{m}$. Thus, by Lemma
\ref{lemNA2}, the union $\cup_{t=1}^m \set{X_0(t), \ldots, X_n(t)}$
is a set of negatively associated random variables. The negative
association of the $Y_i$ is assured now by Lemma \ref{lemNA3} as
they can be written as non-decreasing functions of disjoint subsets
of the indicators $X_j(t)$. Since the $Y_i$ are negatively
associated and identically distributed, then
\[ \prob{Y_1 = 0, \ldots, Y_N =0} \leq \prob{Y_1=0} \times \cdots \times \prob{Y_N=0}
      \leq \exp \pran{- N \prob{Y_1 = 1 }} \, . \]
Thus, we only need to show that $N \prob{Y_1 = 1}$ tends to infinity
as $n$ goes to infinity. To bound the last probability, we need to
focus on the way the first block $\mathfrak{B}_1 = \set{1, 2,
\ldots, \beta}$ is covered. For $j \in \intSet{n}$, let $t_j$ be the
smallest $t \in \intSet{m}$ such that $X_j(t) =1$ (if such exists),
and $m+1$ otherwise. We say that the first block is ``covered in
order" if and only if $1 \leq t_1 < t_2 < \cdots < t_{\beta} \leq
m$. Since there are $\beta!$ orderings of the cells in which they
can be covered (for the first time), we have
\[ \prob{Y_1 =1}
     = \beta! \; \prob{\mathfrak{B}_1 \text{is covered in order}} \, .  \]
For $t \in \intSet{m}$, let $M(t) = 1$ if block $\mathfrak{B}_1$ is full before the insertion of
the $t$-th key, and otherwise be the minimum $i \in \mathfrak{B}_1$ such that the cell $i$ has not
been covered yet. Let $A$ be the event that, for all $t \in \intSet{m}$, the first initial hashing
cell of the $t$-th key is either cell $M(t)$ or a cell outside $\mathfrak{B}_1$. Define the random
variable $W := \sum_{t=1}^m W_t$, where $W_t$ is the indicator that the $t$-th key covers a cell in
$\mathfrak{B}_1$. Clearly, if $A$ is true and $W \geq \beta$, then the first block is covered in
order. Thus,
\[ \prob{Y_1 =1} \geq \beta! \; \prob{\event{W \geq \beta} \cap A}
         = \beta! \; \prob{A} \prob{W \geq \beta \, | \, A} \, . \]
However, since the initial hashing cells are chosen independently
and uniformly at random, then for $n$ chosen large enough, we have
\[ \prob{A} \geq \pran{1 - \frac{\beta}{n} }^m  \geq e^{- 2 \beta } \, , \]
and for $t \geq \ceil{m/2}$,
\[ \prob{W_t = 1 \, | \, A} = \frac{1}{n-\beta+1} \cdot \frac{t-1}{n}
     \geq \frac{\alpha}{4 n}  \, . \]
Therefore, for $n$ sufficiently large, we get
\begin{eqnarray*}
  N \prob{Y_1 = 1}
  &\geq& N \beta! \, e^{- 2 \beta } \,
            \prob{\binomial\pran{\ceil{m/2}, \alpha/(4n)} \geq \beta } \\
   &\geq& N \beta! \, e^{- 2 \beta } \,
            \frac{(m/2 -\beta)^\beta}{\beta!}
               \pran{\frac{\alpha}{4n}}^{\beta}
             \pran{1-\frac{\alpha}{4n}}^n \\
  &\geq&     N \pran{ \frac{ \alpha^2}{8 e^2}}^{\beta}
         \pran{1-\frac{2\beta}{m}}^{\beta} \pran{1-\frac{1}{4n}}^n\\
  &\geq&  \frac{n}{4 \beta} \pran{ \frac{ \alpha^2}{8 e^2}}^{\beta} \, ,
\end{eqnarray*}
which goes to infinity as $n$ approaches infinity whenever  $a = 8 e^2 /\alpha^2$ and $b$ is any
positive constant less than 1. \hfill$\square$ \\[-4pt]

Clearly, algorithms \textsc{ShortSeq}$(n,m)$ and \textsc{SmallCluster}$(n,m)$ satisfy the condition
of Theorem \ref{emptyCellsTher}. So this corollary follows.

\begin{Cor}
Let $n \in \mathbb{N}$, and $m = \floor{\alpha n}$, where $\alpha \in (0,1)$ is constant. The size
of the largest cluster generated by algorithm \textsc{ShortSeq}$(n,m)$ is $\Omega(\log n)$, w.h.p.
The same result holds for algorithm \textsc{SmallCluster}$(n,m)$.
\end{Cor}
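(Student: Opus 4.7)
The plan is to deduce the corollary immediately from Theorem \ref{emptyCellsTher} by verifying that both algorithms fall into the class to which that theorem applies. All that must be checked is the following property: whenever a key's two initial hashing cells consist of one empty cell and one occupied cell, the algorithm inserts the key into the empty one.

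For \textsc{SmallCluster}$(n,m)$ this is essentially built into the specification: the description of the algorithm already states that if exactly one of the two initial cells is empty, the key is placed into it. So there is nothing to prove in this case beyond pointing back at the definition of the algorithm.

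For \textsc{ShortSeq}$(n,m)$ I would argue as follows. The algorithm places each key $x$ in the first unoccupied cell encountered along the interleaved sequence $f_1(x), g_1(x), f_2(x), g_2(x), \ldots$, and the two initial hashing cells $f_1(x)$ and $g_1(x)$ are precisely the first two entries of this sequence. Therefore, if exactly one of $f_1(x), g_1(x)$ is empty and the other is occupied, the first unoccupied cell encountered must be whichever of the two initial cells is empty, which means the key is inserted into the empty initial cell, as required. The possibility that $f_1(x) = g_1(x)$ is harmless, since in that case either both ``initial cells'' are empty or both are occupied and the hypothesis to be checked is vacuous.

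With the hypothesis of Theorem \ref{emptyCellsTher} verified in both cases, the theorem directly furnishes an $\Omega(\log n)$ lower bound on the maximum cluster size produced by each of \textsc{ShortSeq}$(n,m)$ and \textsc{SmallCluster}$(n,m)$, w.h.p. There is no real obstacle here: once Theorem \ref{emptyCellsTher} is in hand, the entire content of the corollary is the straightforward bookkeeping step of recognising that both algorithms respect the stated insertion rule.
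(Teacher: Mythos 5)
Your proposal is correct and matches the paper's approach exactly: the paper also derives the corollary by observing that both algorithms satisfy the hypothesis of Theorem \ref{emptyCellsTher}, merely stating this as ``clear,'' whereas you spell out the verification (for \textsc{ShortSeq}, that $f_1(x),g_1(x)$ are the first two entries of the interleaved probe order; for \textsc{SmallCluster}, that the rule is explicit in the algorithm's definition). Nothing further is needed.
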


\section{Hashing with Blocking} \label{blockSec}

To overcome the problems of Section \ref{badAlgSec}, we introduce
\emph{blocking}. The hash table is partitioned into equal-sized
disjoint blocks of cells. Whenever a key has two terminal cells, the
algorithm considers the information provided by the blocks, e.g.,
the number of keys it harbors, to make a decision. Thus, the
blocking technique enables the algorithm to avoid some of the bad
decisions the previous algorithms make. This leads to a more
controlled allocation process, and hence, to a more even
distribution of the keys. We use the blocking technique to design
two two-way linear probing algorithms, and an algorithm that uses
linear probing locally within each block. The algorithms are
characterized by the way the keys pick their blocks to land in. The
worst-case performance of these algorithms is analyzed and proven to
be $O(\log \log n)$, w.h.p.

Note also that (for insertion operations only) the algorithms require a counter with each block,
but the extra space consumed by these counters is asymptotically negligible. In fact, we will see
that the extra space is $O(n/\log \log n)$ in a model in which integers take $O(1)$ space, and at
worst $O(n \log \log \log n /\log \log n)=o(n)$ units of memory, w.h.p., in a bit model.

Since the block size for each of the following algorithms is
different, we assume throughout and without loss of generality, that
whenever we use a block of size $\beta$, then $n/\beta$ is an
integer. Recall that the cells are numbered $0, \ldots, n-1$, and
hence, for $i \in \intSet{n /\beta}$, the $i$-th block consists of
the cells $(i-1) \beta, \ldots, i \beta -1$. In other words, the
cell $k \in \set{0, \ldots, n-1}$ belongs to block number
$\lambda(k) := \floor{k/\beta} + 1$.

\subsection{Two-way Locally-Linear Probing}

As a simple example of the blocking technique, we present the following
algorithm which is a trivial application of the two-way chaining scheme
\cite{ABKU:00}. The algorithm does not satisfy the definition of two-way
linear probing as we explained earlier, because the linear probes are
performed within each block and not along the hash table. That is, whenever
the linear probe sequence reaches the right boundary of a block, it
continues probing starting from the left boundary of the same block.

The algorithm partitions the hash table into disjoint blocks each of size $\beta_1(n)$, where
$\beta_1(n)$ is an integer to be defined later. We save with each block its load, that is, the
number of keys it contains, and keep it updated whenever a key is inserted in the block. For each
key we choose two initial hashing cells, and hence two blocks, independently and uniformly at
random, with replacement. From the initial cell that belongs to the least loaded block, breaking
ties randomly, we probe linearly and cyclically \emph{within the block} until we find an empty cell
where we insert the key. If the load of the block is $\beta_1$, i.e., it is full, then we check its
right neighbor block and so on, until we find a block that is not completely full. We insert the
key into the first empty cell there. Notice that only one probe sequence is used to insert any key.
The search operation, however, uses two probe sequences as follows. First, we compute the two
initial hashing cells. We start probing linearly and cyclically \emph{within} the two (possibly
identical) blocks that contain these initial cells. If both probe sequences reach empty cells, or
if one of them reaches an empty cell and the other one finishes the block without finding the key,
we declare the search to be unsuccessful. If both blocks are full and the probe sequences
completely search them without finding the key, then the right neighbors of these blocks
(cyclically speaking) are searched sequentially in the same way mentioned above, and so on. We will
refer to this algorithm by \textsc{LocallyLinear}$(n,m)$ for inserting $m$ keys into $n$ cells. We
show next that $\beta_1$ can be defined such that none of the blocks are completely full, w.h.p.
This means that whenever we search for any key, most of the time, we only need to search linearly
and cyclically the two blocks the key chooses initially.

\begin{Ther} \label{localLinearTher}
Let $n \in \mathbb{N}$, and $m = \floor{\alpha n}$, where $\alpha \in (0,1)$
is a constant. Let $C$ be the constant defined in Theorem \ref{BerenTher1},
and define
\[ \beta_1 (n) := \floor{ \frac{\log_2 \log n + C}{1-\alpha} +1 } \, .  \]
Then, w.h.p., the maximum unsuccessful search time of
\textsc{LocallyLinear}$(n,m)$ with blocks of size $\beta_1$ is at most
$2\beta_1$, and the maximum insertion time is at most $\beta_1 -1$.
\end{Ther}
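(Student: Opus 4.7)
The plan is to reduce the analysis of \textsc{LocallyLinear}$(n,m)$ directly to the greedy multiple-choice allocation result (Theorem \ref{BerenTher1}). View each block as a bin of capacity $\beta_1$ and each key as a ball. As long as no bin ever overflows, the algorithm's behavior is identical to \textsc{GreedyMC}$(s,m,2)$ with $s = n/\beta_1$ bins: each key picks two bins uniformly at random (via the two initial cells), and is placed into the less loaded of the two, breaking ties randomly. The only place this identification could break down is if some chosen block were already completely full, forcing the algorithm to spill into a neighboring block. So the whole proof hinges on showing that this overflow event does not occur, w.h.p.

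First, I would couple the algorithm with the idealized process \textsc{GreedyMC}$(n/\beta_1, m, 2)$ and observe that the coupling is exact until the first time some block reaches load $\beta_1$. Applying Theorem \ref{BerenTher1} to the idealized process with $s = n/\beta_1$, $r = m \leq \alpha n$, and $d=2$, the maximum load after all $m$ insertions is at most
\[
    \log_2 \log(n/\beta_1) + \frac{m}{n/\beta_1} + C
        \;\leq\; \log_2 \log n + \alpha \beta_1 + C
\]
with high probability. Then I would verify that the choice
$\beta_1 = \floor{(\log_2\log n + C)/(1-\alpha) + 1}$
makes the right-hand side strictly less than $\beta_1$, i.e.\ $(1-\alpha)\beta_1 \geq \log_2\log n + C$, so w.h.p.\ no block is ever full. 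Consequently the coupling remains exact throughout the whole run.

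Once every block has at least one empty cell at every moment, the rest is immediate. An insertion probes linearly and cyclically inside the chosen less-loaded block; since the block contains at most $\beta_1 - 1$ occupied cells arranged in some pattern, the probe sequence hits an empty cell within at most $\beta_1 - 1$ steps, giving the insertion-time bound. An unsuccessful search runs two such within-block probe sequences (one per initial cell), and each terminates at an empty cell in at most $\beta_1$ probes, yielding the bound $2\beta_1$.

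The main obstacle is really just the bookkeeping in the reduction step: one must argue that the greedy multiple-choice model truly captures the block-load dynamics of the algorithm (tie-breaking is uniform, the two initial cells correspond to two independent uniform bin choices when the blocks are of equal size, and the intra-block probing does not disturb the load count). All of this is clean once the ``no block ever fills'' event is established, so the whole weight of the proof sits on verifying the inequality $(1-\alpha)\beta_1 \geq \log_2\log n + C$ for the chosen $\beta_1$ and invoking Theorem \ref{BerenTher1}.
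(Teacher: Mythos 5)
Your proof is correct and follows essentially the same route as the paper: identify \textsc{LocallyLinear} with \textsc{GreedyMC}$(n/\beta_1,m,2)$, apply Theorem \ref{BerenTher1} to bound the maximum block load by $\log_2\log n + \alpha\beta_1 + C < \beta_1$, and conclude that no block ever fills, so the reduction is exact and the probe sequences terminate inside the two chosen blocks. One small point: to get the insertion bound $\beta_1-1$ (rather than $\beta_1$) you should note that at the moment a key is inserted its block has load at most $\beta_1-2$, since the load becomes $\leq\beta_1-1$ only \emph{after} that insertion; otherwise a block with $\beta_1-1$ occupied cells could cost $\beta_1$ probes.
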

\begin{proof}
Notice the equivalence between algorithm \textsc{LocallyLinear}$(n,m)$ and
the allocation process \textsc{GreedyMC}$(n/\beta_1, m, 2)$ where $m$ balls
(keys) are inserted into $n/\beta_1$ bins (blocks) by placing each ball into
the least loaded bin among two bins chosen independently and uniformly at
random, with replacement. It suffices, therefore, to study the maximum bin
load of \textsc{GreedyMC}$(n/\beta_1, m, 2)$ which we denote by $L_n$.
However, Theorem \ref{BerenTher1} says that w.h.p.,
\[ L_n  \leq  \log_2 \log n + C  + \alpha \beta_1
              <  (1-\alpha)\beta_1 + \alpha \beta_1  = \beta_1 \, .
\]
and similarly,
\[ L_n  \geq \log_2 \log n + \alpha \beta_1  - C
           >  \frac{\log_2 \log n + C}{1 - \alpha}  - 2 C
         \geq \beta_1 -  2 C - 1 \, .
\]
\end{proof}

\subsection{Two-way Pre-linear Probing:
algorithm D{\small{ECIDE}}F{\small{IRST}}}

In the previous two-way linear probing algorithms, each input key initiates linear probe sequences
that reach two terminal cells, and then the algorithms decide in which terminal cell the key should
be inserted. The following algorithm, however, allows each key to choose two initial hashing cells,
and then decides, according to some strategy, which initial cell should start a linear probe
sequence to find a terminal cell to harbor the key. So, technically, the insertion process of any
key uses only one linear probe sequence, but we still use two sequences for any search.

\begin{figure}[htbp]
\begin{center}
  \scalebox{0.95}{\input{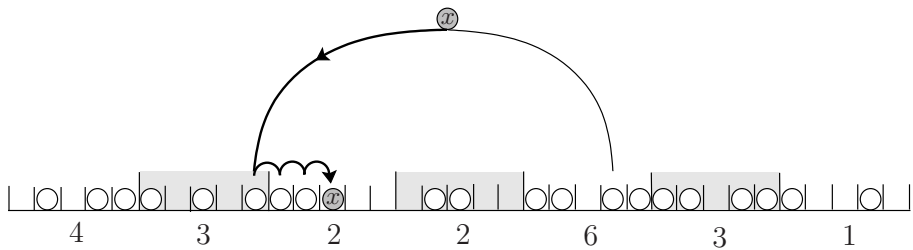}}
  \caption{An illustration of algorithm \textsc{DecideFirst}$(n,m)$.
The hash table is divided into blocks of size $\beta_2$. The number
under each block is its weight. Each key decides first to land into
the block of smaller weight, breaking ties randomly, then probes
linearly to find its terminal cell.}\label{figDecideFirst}
\end{center}
\end{figure}

Formally, we describe the algorithm as follows. Let $\alpha \in
(0,1)$ be the load factor. Partition the hash table into blocks of
size $\beta_2 (n)$, where $\beta_2 (n)$ is an integer to be defined
later. Each key $x$ still chooses, independently and uniformly at
random, two initial hashing cells, say $I_x$ and $J_x$, and hence,
two blocks which we denote by $\lambda(I_x)$ and $\lambda(J_x)$. For
convenience, we say that the key $x$ has \emph{landed} in block $i$,
if the linear probe sequence used to insert the key $x$ has started
(from the initial hashing cell available for $x$) in block $i$.
Define the weight of a block to be the number of keys that have
landed in it. We save with each block its weight, and keep it
updated whenever a key lands in it. Now, upon the arrival of key
$x$, the algorithm allows $x$ to land into the block among
$\lambda(I_x)$ and $\lambda(J_x)$ of smaller weight, breaking ties
randomly. Whence, it starts probing linearly from the initial cell
contained in the block until it finds a terminal cell into which the
key $x$ is placed. If, for example, both $I_x$ and $J_x$ belong to
the same block, then $x$ lands in $\lambda(I_x)$, and the linear
sequence starts from an arbitrarily chosen cell among $I_x$ and
$J_x$. We will write \textsc{DecideFirst}$(n,m)$ to refer to this
algorithm for inserting $m$ keys into $n$ cells.

In short, the strategy of \textsc{DecideFirst}$(n,m)$ is: land in
the block of smaller weight, walk linearly, and insert into the
first empty cell reached. The size of the largest cluster produced
by the algorithm is $\Theta(\log \log n)$. The performance of this
hashing technique is described in Theorem \ref{decideTher}:

\begin{Ther} \label{decideTher}
Let $n \in \mathbb{N}$, and $m = \floor{\alpha n}$, where $\alpha \in (0,1)$
is a constant. There is a constant $\eta >0$ such that if
\[ \beta_2 (n)
     := \ceil{\frac{(1+\sqrt{2 - \alpha})}{\sqrt{2- \alpha}(1-\alpha)}
               (\log_2 \log n + \eta)  } \, ,
\]
then, w.h.p., the worst-case unsuccessful search time of algorithm
\textsc{DecideFirst}$(n,m)$ with blocks of size $\beta_2$ is at most $\xi_n
:= 12(1-\alpha)^{-2} (\log_2 \log n +\eta)$, and the maximum insertion time
is at most $\xi_n/2$.
\end{Ther}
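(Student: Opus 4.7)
The plan is to reduce the block--weight dynamics of \textsc{DecideFirst} to the two--choice greedy allocation process \textsc{GreedyMC}$(n/\beta_2, m, 2)$, identifying each of the $n/\beta_2$ blocks with a bin and each key's ``landing'' in a block with the placement of a ball. Since the two initial cells of a key are independent and uniform on $\{0,\dots,n-1\}$, the two associated blocks are independent and uniform on $\{1,\dots,n/\beta_2\}$, and the key is sent to the block of smaller current weight (uniform tie--breaking when the two cells lie in the same block). Only the landing step affects any weight, so the weight sequence coincides with the load sequence of \textsc{GreedyMC}$(n/\beta_2, m, 2)$, and Theorem~\ref{BerenTher1} yields, w.h.p.,
\[
W \;:=\; \max_i W_i \;\le\; \log_2 \log n + \alpha\beta_2 + C.
\]
Writing $r:=\sqrt{2-\alpha}$ and $M:=\log_2\log n + \eta$, the stated $\beta_2$ satisfies $(1-\alpha)\beta_2 = (1+r)M/r$, so for $\eta$ larger than a constant depending on $\alpha$ and $C$ we get $\beta_2 - W \ge M/r$; in particular $W < \beta_2$ comfortably.

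Next I would convert the weight bound into a cluster--size bound by a telescoping argument. Fix a cluster of size $k$ spanning $t$ consecutive blocks $B_1,\dots,B_t$. Two standard FCFS linear--probing facts are in force: every key in the cluster has its initial cell inside the cluster, and no key's initial cell ever sits in a cell that is empty at the end (else it would have been placed there). These imply that the only keys ending in the cluster are those landing in $B_1 \cup \cdots \cup B_t$, and that every interior block $B_2,\dots,B_{t-1}$ lies entirely in the cluster and is therefore completely full. Let $O_i$ denote the number of keys that land in $B_1 \cup \cdots \cup B_i$ but come to rest in $B_{i+1} \cup \cdots \cup B_t$. The bookkeeping identity $L_{B_i} = W_i + O_{i-1} - O_i$ applied to each interior block ($L_{B_i}=\beta_2$) telescopes over $i=2,\dots,t-1$ to give
\[
0 \;\le\; O_{t-1} \;=\; O_1 + \sum_{i=2}^{t-1} W_i - (t-2)\beta_2 \;\le\; (t-1)W - (t-2)\beta_2,
\]
using $O_1 \le W_1 \le W$ and $W_i \le W$. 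Rearranging,
\[
t \;\le\; \frac{2\beta_2 - W}{\beta_2 - W}, \qquad\text{hence}\qquad k \;\le\; t\,\beta_2 \;\le\; \frac{\beta_2\,(2\beta_2 - W)}{\beta_2 - W}.
\]

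The last step is substitution. The right--hand side is increasing in $W$, so evaluating at the upper bound $W \le M + \alpha\beta_2$ (absorbing $C$ into $\eta$) and using the algebraic identity $r(1+r)-(1-\alpha) = r+1$ gives $2\beta_2 - W \le (1+r)M/(1-\alpha)$, whence
\[
k \;\le\; \beta_2 \cdot \frac{r(1+r)}{1-\alpha} \;=\; \frac{(1+\sqrt{2-\alpha})^{\,2}}{(1-\alpha)^{2}}\,M.
\]
Since $(1+\sqrt{2-\alpha})^2 \le (1+\sqrt{2})^2 = 3+2\sqrt{2} < 6$ uniformly on $\alpha\in(0,1)$, this yields $k \le \xi_n/2$ w.h.p. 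Since the insertion time of any key is bounded by the size of the cluster in which it lands, it is at most $\xi_n/2$; and an unsuccessful search alternates two probe sequences each of length at most one more than a cluster, so it is at most $2k+2 \le \xi_n$ for $n$ large.

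The main technical obstacle is the telescoping step. A naive count $k \le tW$ combined with $t \le \lceil k/\beta_2 \rceil + 1$ yields only $k \le 2\beta_2 W/(\beta_2 - W)$, whose coefficient simplifies to $8/(1-\alpha)^2$ and \emph{fails} to meet the required $6/(1-\alpha)^2$. Exploiting that every interior block of a cluster is literally full---which forces $O_i = W_i + O_{i-1} - \beta_2$ to remain nonnegative along the interior---is precisely what replaces the factor $2\beta_2$ by the sharper $2\beta_2 - W$ and delivers the correct constant.
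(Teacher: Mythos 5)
Your proof is correct and follows essentially the same route as the paper: both reduce the block-weight evolution to \textsc{GreedyMC}$(n/\beta_2,m,2)$, invoke Theorem~\ref{BerenTher1}, observe that all keys in the interior (full) blocks of a cluster must have landed within the cluster's span of blocks, derive the bound $t\le(2\beta_2-W_{\max})/(\beta_2-W_{\max})$ on the number $t$ of blocks a cluster spans, bound the cluster size by $t\beta_2$, and optimize to obtain the constant $(1+\sqrt{2-\alpha})^2/(1-\alpha)^2<6/(1-\alpha)^2$. Your telescoping identity $L_{B_i}=W_i+O_{i-1}-O_i$ is a cleaner packaging of the paper's argument that all keys in the run of $M$ full blocks landed in those blocks or their left neighbor, giving $W\ge M\beta_2$; both yield the identical final estimate.
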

\begin{proof}
Assume first that \textsc{DecideFirst}$(n,m)$ is applied to a hash table with blocks of size
$\beta= \ceil{b (\log_2 \log n + \eta)}$, and that $n/\beta$ is an integer, where $b =
(1+\epsilon)/(1-\alpha)$, for some arbitrary constant $\epsilon>0$. Consider the resulting hash
table after termination of the algorithm. Let $M \geq 0$ be the maximum number of consecutive
blocks that are fully occupied. Without loss of generality, suppose that these blocks start at
block $i+1$, and let $S= \set{i, i+1, \ldots, i + M}$ represent these full blocks in addition to
the left adjacent block that is not fully occupied (Figure \ref{fullblocks}).

\begin{figure}[htbp]
\begin{center}
  \scalebox{0.8}{\input{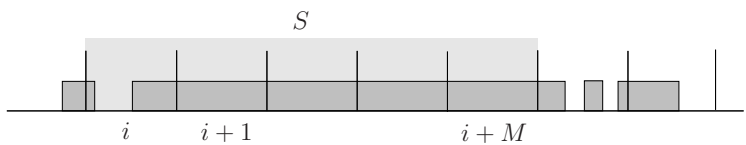}}
  \caption{A portion of the hash table showing the largest cluster,
and the set $S$ which consists of the full consecutive blocks and their left
neighbor.}\label{fullblocks}
\end{center}
\end{figure}

Notice that each key chooses two cells (and hence, two possibly
identical blocks) independently and uniformly at random. Also, any
key always lands in the block of smaller weight. Since there are $n
/ \beta$ blocks, and $\floor{\alpha n}$ keys, then by Theorem
\ref{BerenTher1}, there is a constant $C>0$ such that the maximum
block weight is not more than $\lambda_n := (\alpha b + 1) \log_2
\log n +\alpha b \eta + \alpha + C$, w.h.p. Let $A_n$ denote the
event that the maximum block weight is at most $\lambda_n$. Let $W$
be the number of keys that have landed in $S$, i.e., the total
weight of blocks contained in $S$. Plainly, since block $i$ is not
full, then all the keys that belong to the $M$ full blocks have
landed in $S$. Thus, $W \geq M b (\log_2 \log n + \eta)$,
deterministically. Now, clearly, if we choose $\eta = C+ \alpha$,
then the event $A_n$ implies that $(M+1) (\alpha b + 1) \geq M b$,
because otherwise, we have
\[ W \leq (M+1) (\alpha b + 1) \pran{ \log_2 \log n +
      \frac{\alpha b \eta + \alpha + C}{\alpha b + 1 } }
       <  M b (\log_2 \log n + \eta)   \, ,
\]
which is a contradiction. Therefore, $A_n$ yields that
\[ M \leq \frac{\alpha b + 1}{(1 -\alpha )b-1}
     \leq \frac{1 + \epsilon \alpha}{\epsilon (1-\alpha)} \, .
\]
Recall that $(\alpha b + 1) < b=(1+\epsilon)/(1-\alpha)$. Again, since block
$i$ is not full, the size of the largest cluster is not more than the total
weight of the $M+2$ blocks that cover it. Consequently, the maximum cluster
size is, w.h.p., not more than
\[ (M+2)  (\alpha b + 1) (\log_2 \log n +\eta )
       \leq \frac{\psi(\epsilon)}{(1-\alpha)^2} (\log_2 \log n +\eta)  \, ,
\]
where $\psi(\epsilon) := 3 - \alpha + (2-\alpha)\epsilon + 1/\epsilon$. Since $\epsilon$ is
arbitrary, we choose it such that $\psi(\epsilon)$ is minimum, i.e., $\epsilon =
1/\sqrt{2-\alpha}$; in other words, $\psi(\epsilon) = 3 - \alpha + 2 \sqrt{2 - \alpha} < 6$. This
concludes the proof as the maximum unsuccessful search time is at most twice the maximum cluster
size plus two.
\end{proof}

\begin{Rem}
We have showed that w.h.p.\ the maximum cluster size produced by
\textsc{DecideFirst}$(n,m)$ is in fact not more than
\[ \frac{3 - \alpha + 2 \sqrt{2 - \alpha}}{(1-\alpha)^2} \log_2 \log n + O(1)
    < \frac{6}{(1-\alpha)^2} \log_2 \log n + O(1) \, .
\]
\end{Rem}

\subsection{Two-way Post-linear Probing:
algorithm W{\small{ALK}}F{\small{IRST}}}

We introduce yet another hashing algorithm that achieves $\Theta(\log \log n)$ worst-case search
time, in probability, and shows better performance in experiments than \textsc{DecideFirst}
algorithm as demonstrated in the simulation results presented in Section \ref{SimulationSec}.
Suppose that the load factor $\alpha \in (0, 1/2)$, and that the hash table is divided into blocks
of size
\[\beta_3 (n) := \ceil{\frac{\log_2 \log n + 8}{1-\delta}} \, ,  \]
where $\delta \in (2 \alpha, 1)$ is an arbitrary constant. Define the load of a block to be the
number of keys (or occupied cells) it contains. Suppose that we save with each block its load, and
keep it updated whenever a key is inserted into one of its cells. Recall that each key $x$ has two
initial hashing cells. From these initial cells the algorithm probes linearly and cyclically until
it finds two empty cells $U_x$ and $V_x$, which we call terminal cells. Let $\lambda(U_x)$ and
$\lambda(V_x)$ be the blocks that contain these cells. The algorithm, then, inserts the key $x$
into the terminal cell (among $U_x$ and $V_x$) that belongs to the least loaded block among
$\lambda(U_x)$ and $\lambda(V_x)$, breaking ties randomly. We refer to this algorithm of open
addressing hashing for inserting $m$ keys into $n$ cells as \textsc{WalkFirst}$(n,m)$.

\begin{figure}[htbp]
\begin{center}
  \scalebox{0.88}{\input{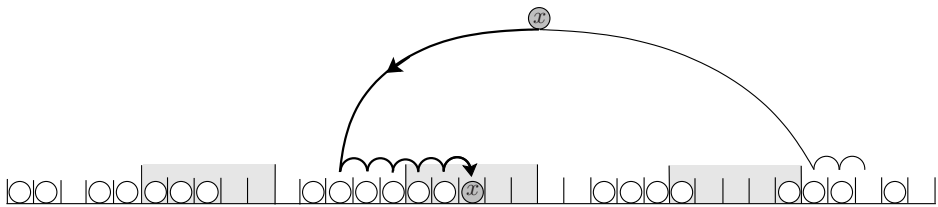}}
  \caption{Algorithm \textsc{WalkFirst}$(n,m)$ inserts each key
into the terminal cell that belongs to the least crowded block, breaking ties
arbitrarily.}\label{figblocking}
\end{center}
\end{figure}

In the remainder of this section, we analyze the worst-case performance of
algorithm \textsc{WalkFirst}$(n,m)$. Recall that the maximum unsuccessful
search time is bounded from above by twice the maximum cluster size plus
two. The following theorem asserts that upon termination of the algorithm,
it is most likely that every block has at least one empty cell. This implies
that the length of the largest cluster is at most $2\beta_3-2$.

\begin{Ther} \label{walkTher}
Let $n \in \mathbb{N}$, and $m = \floor{\alpha n}$, for some constant
$\alpha \in (0,1/2)$. Let $\delta \in (2 \alpha, 1)$ be an arbitrary
constant, and define
\[ \beta_3 (n)
    := \ceil{\frac{\log_2 \log n + 8}{1- \delta}} \, .
\]
Upon termination of algorithm \textsc{WalkFirst}$(n,m)$ with blocks of size
$\beta_3$, the probability that there is a fully loaded block goes to zero
as $n$ tends to infinity. That is, w.h.p., the maximum unsuccessful search
time of \textsc{WalkFirst}$(n,m)$ is at most $4 \beta_3 -2$, and the maximum
insertion time is at most $4 \beta_3 - 4$.
\end{Ther}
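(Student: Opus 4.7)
The plan is to reduce the theorem to the single statement that, with probability $1-o(1)$, no block ever reaches load $\beta_3$. Everything else follows immediately: if every block retains at least one empty cell, then no maximal cluster can span three consecutive blocks, so every cluster has length at most $2\beta_3-2$. An unsuccessful search then traverses the two clusters containing its initial cells plus two sentinel empty cells, at most $4\beta_3-2$ probes; an insertion traces the same two clusters, at most $4\beta_3-4$ probes. So it suffices to bound $\Pr[\text{some block hits load }\beta_3]$.

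I would view the $N:=n/\beta_3$ blocks as bins and each insertion as placing a ball in the less-loaded of the two bins containing its terminal cells. If the two candidate bins were uniform on $[N]$ and independent, the heavily-loaded two-choice bound of Berenbrink et al.\ (Theorem~\ref{BerenTher1}) would give a maximum bin load of $\log_2\log n+\alpha\beta_3+O(1)$, comfortably below $\beta_3$. The candidate bins produced by linear probing are not uniform, but the key quantitative observation is that, so long as every existing cluster has length at most $2\beta_3-2$, each block $B$ is the terminal block of a uniformly random initial cell with probability at most $(2\beta_3-1)/n$. Indeed, summing over the empty cells $e_1,\dots,e_k$ of $B$, the set of initial cells whose linear probe ends at $e_j$ is exactly $\{e_j\}$ together with the cluster immediately preceding $e_j$; these cluster cells lie either inside $B$ (totaling the $\beta_3-k$ occupied cells there) or in the block immediately before $B$, which also contains an empty cell and so contributes at most $\beta_3-1$ occupied cells. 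This ``factor-of-two'' non-uniformity corresponds to running the two-choice process on $N/2$ effective bins with per-bin load $2\alpha\beta_3$, giving maximum load $\log_2\log n+2\alpha\beta_3+O(1)$. The hypothesis $\delta>2\alpha$ forces this to be strictly below $\beta_3=\lceil(\log_2\log n+8)/(1-\delta)\rceil$, since $(1-2\alpha)/(1-\delta)>1$ and the constant $8$ absorbs the $O(1)$ additive term.

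The main obstacle is the circular dependence between the two good events: the factor-of-two cluster cap used above depends on the conclusion we are trying to prove. I would close this loop with a stopping-time bootstrap. Let $T$ be the first insertion at which either some block first attains load $\beta_3$ or some cluster first exceeds $2\beta_3-2$; at every step strictly before $T$ both good events hold by definition, so the terminal-block estimate $\le 2/N$ applies step by step. Rather than invoking Theorem~\ref{BerenTher1} as a black box (which assumes i.i.d.\ uniform bin choices), I would re-run its Azar--Broder--Karlin--Upfal-style witness-tree argument in place, using at each node only the conditional bound $\Pr[B\text{ is a candidate at step }t\mid \mathcal{F}_{t-1},\,t<T]\le 2/N$. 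The resulting witness tree has depth $\Theta(\log\beta_3)=\Theta(\log\log\log n)$ and existence probability $o(N^{-1})$ for each fixed $B$; a union bound over the $N$ blocks finishes. The interplay between cluster length and block load is the single delicate point of the proof, and it is precisely where the strengthened hypothesis $\delta>2\alpha$ (hence $\alpha<1/2$) is used.
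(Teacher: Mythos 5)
Your proposal correctly identifies the two load-bearing ideas of the paper's proof: (i) conditional on no block being full, each fixed block is the terminal block of a uniform initial cell with probability at most $2\beta_3/n$ (this is precisely what the paper exploits in Lemma~\ref{WitnessPro}); and (ii) a witness-tree argument with conditioning on the ``no full block yet'' event closes the circular dependency (the paper conditions on $A_{k-1}$, your stopping time $T$ is the same device in different clothing). So the skeleton matches the paper's.

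There is, however, a genuine gap at the base of the witness tree, and it is not a detail one can wave away. The witness tree terminates at ``block nodes'' representing blocks that already carry load at least $\xi=\lceil\delta\beta_3\rceil$, and the per-node probability bound for those leaves is not $2/N$: it is $2/(a\xi)$, obtained by multiplying $2\beta_3/n$ by the \emph{number} of such heavy blocks. One must therefore prove, as a separate input, that w.h.p.\ the number of blocks with terminal load at least $\xi$ is $O(n/(\beta_3\xi))$. That is the content of the paper's Lemma~\ref{highBlocks}, and it is not a routine balls-in-bins count: it bounds the load of a fixed block $B$ by the number of keys \emph{born} in the maximal run $S$ of occupied cells immediately preceding $B$, observes that $|S|=j$ forces at least $j$ keys to be born in $S$ (because the cell to the left of $S$ is empty), and applies a binomial tail bound to $\nu(S\cup B)\sim\mathrm{binomial}(m,2(|S|+\beta_3)/n)$. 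The constant $c=2\alpha e^{1-2\alpha}<1$ in that tail bound is exactly where $\alpha<1/2$ is used. Your proposal attributes the restriction $\alpha<1/2$ entirely to the ``$N/2$ effective bins'' heuristic; that picture is suggestive, but the actual mechanism is this run-length tail estimate, which your sketch never supplies. Without a bound on the population of $\xi$-heavy blocks, the recursion in the witness tree has no floor and the argument does not close.

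Two smaller points. Your witness tree depth $\Theta(\log\beta_3)=\Theta(\log\log\log n)$ is off by an exponential: to drive the leaf count up to $\Theta(\log n)$ (so that ``$\Omega(\log n)$ heavy blocks must exist'' becomes improbable) the tree must have depth $\Theta(\log\log n)$, and indeed the paper takes $h=\eta+\lceil\log_2\log_2 n^\eta\rceil$. And the statement ``existence probability $o(N^{-1})$ for each fixed $B$, union bound over $N$ blocks'' elides the count of distinct witness trees, which grows polynomially in $k$ and must be folded into the estimate; the paper does this by summing $|\mathcal{W}_k(h,w,b)|$ against the per-tree occurrence probability before taking the union over insertion times $k$.
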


For $k \in \intSet{m}$, let us denote by $A_k$ the event that after
the insertion of $k$ keys (i.e., at time $k$), none of the blocks is
fully loaded. To prove Theorem \ref{walkTher}, we shall show that
$\prob{A_m^c}=o(1)$. We do that by using a witness tree argument;
see e.g., \cite{Cole:98a, Cole:98b, Meyer:96, Mitzen:00, Schick:00,
Vocking:99}. We show that if a fully-loaded block exists, then there
is a witness binary tree of height $\beta_3$ that describes the
history of that block. The formal definition of a witness tree is
given below. Let us number the keys $1, \ldots, m$ according to
their insertion time. Recall that each key $t \in \intSet{m}$ has
two initial cells which lead to two terminal empty cells belonging
to two blocks. Let us denote these two blocks available for the
$t$-th key by $X_t$ and $Y_t$. Notice that all the initial cells are
independent and uniformly distributed. However, all terminal
cells---and so their blocks---are not. Nonetheless, for each fixed
$t$, the two random values $X_t$ and $Y_t$ are independent.

\subsubsection*{The History Tree}

We define for each key $t$ a \emph{full history tree} $T_t$ that
describes essentially the history of the block that contains the
$t$-th key up to its insertion time. It is a colored binary tree
that is labelled by key numbers except possibly the leaves, where
each key refers to the block that contains it. Thus, it is indeed a
binary tree that represents all the pairs of blocks available for
all other keys upon which the final position of the key $t$ relies.
Formally, we construct the binary tree node by node in
Breadth-First-Search (\textsc{bfs}) order as follows. First, the
root of $T_t$ is labelled $t$, and is colored white. Any white node
labelled $\tau$ gets two children: a left child corresponding to the
block $X_{\tau}$, and a right child corresponding to the block
$Y_{\tau}$. The left child is labelled and colored according to the
following rules:
\begin{description}
  \item{(a)} If the block $X_{\tau}$ contains some keys at the time of insertion
of key $\tau$, and the last key inserted in that block, say
$\sigma$, has not been encountered thus far in the \textsc{bfs}
order of the binary tree $T_t$, then the node is labelled $\sigma$
and colored white.
  \item{(b)} As in case (a), except that $\sigma$ has already been encountered
in the \textsc{bfs} order. We distinguish such nodes by coloring
them black, but they get the same label $\sigma$.
  \item{(c)} If the block $X_{\tau}$ is empty at the time of insertion of key ${\tau}$,
then it is a ``dead end'' node without any label and it is colored
gray.
\end{description}
Next, the right child of ${\tau}$ is labelled and colored by following the same rules but with the
block $Y_{\tau}$. We continue processing nodes in \textsc{bfs} fashion. A black or gray node in the
tree is a leaf and is not processed any further. A white node with label $\sigma$ is processed in
the same way we processed the key $\tau$, but with its two blocks $X_{\sigma}$ and $Y_{\sigma}$. We
continue recursively constructing the tree until all the leaves are black or gray. See Figure
\ref{fighist1} for an example of a full history tree.

Notice that the full history tree is totally deterministic as it does not contain any random value.
It is also clear that the full history tree contains at least one gray leaf and every internal
(white) node in the tree has two children. Furthermore, since the insertion process is sequential,
node values (key numbers) along any path down from the root must be decreasing (so the binary tree
has the heap property), because any non-gray child of any node represents the last key inserted in
the block containing it at the insertion time of the parent. We will not use the heap property
however.

\begin{figure}[htbp]
\begin{center}
 \scalebox{.8}{\includegraphics{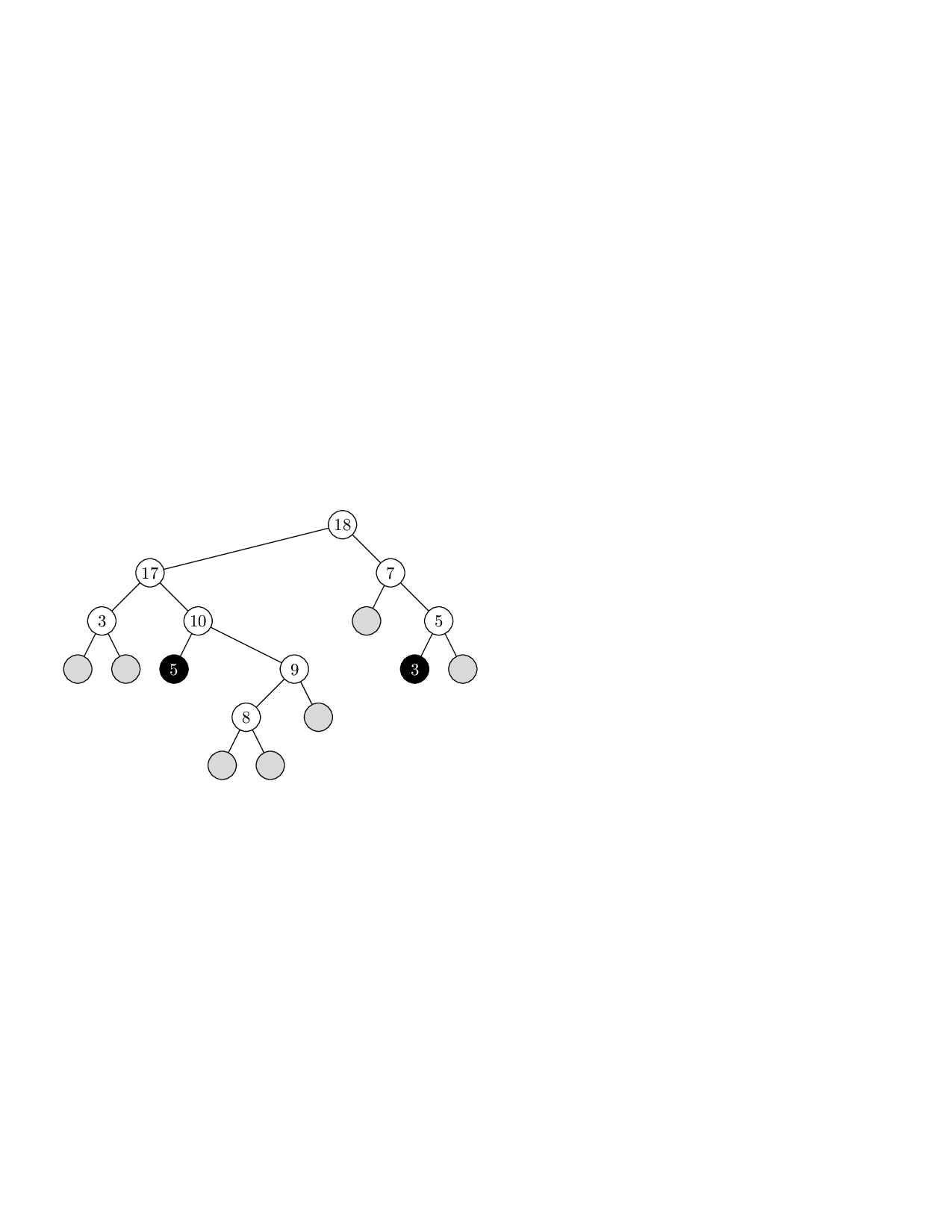}}
 \caption{The full history tree of key 18. White nodes represent type (a)
  nodes. Black nodes are type (b) nodes---they refer to keys already
  encountered in \textsc{bfs} order. Gray nodes are type (c) nodes---they occur
  when a key selects an empty block.}\label{fighist1} 
\end{center}
\end{figure}

Clearly, the full history tree permits one to deduce the load of the block that contains the root
key at the time of its insertion: it is the length of the shortest path from the root to any gray
node. Thus, if the block's load is more than $h$, then all gray nodes must be at distance more than
$h$ from the root. This leads to the notion of a \emph{truncated history tree} of height $h$, that
is, with $h+1$ levels of nodes. The top part of the full history tree that includes all nodes at
the first $h+1$ levels is copied, and the remainder is truncated.

We are in particular interested in truncated history trees without gray nodes. Thus, by the
property mentioned above, the length of the shortest path from the root to any gray node (and as
noted above, there is at least one such node) would have to be at least $h+1$, and therefore, the
load of the block harboring the root's key would have to be at least $h+1$. More generally, if the
load is at least $h + \xi$ for a positive integer $\xi$, then all nodes at the bottom level of the
truncated history tree that are not black nodes (and there is at least one such node) must be white
nodes whose children represent keys that belong to blocks with load of at least $\xi$ at their
insertion time. We redraw these node as boxes to denote the fact that they represent blocks of load
at least $\xi$, and we call them ``block" nodes.

\begin{figure}[htbp]
\begin{center}
 \scalebox{.8}{\includegraphics{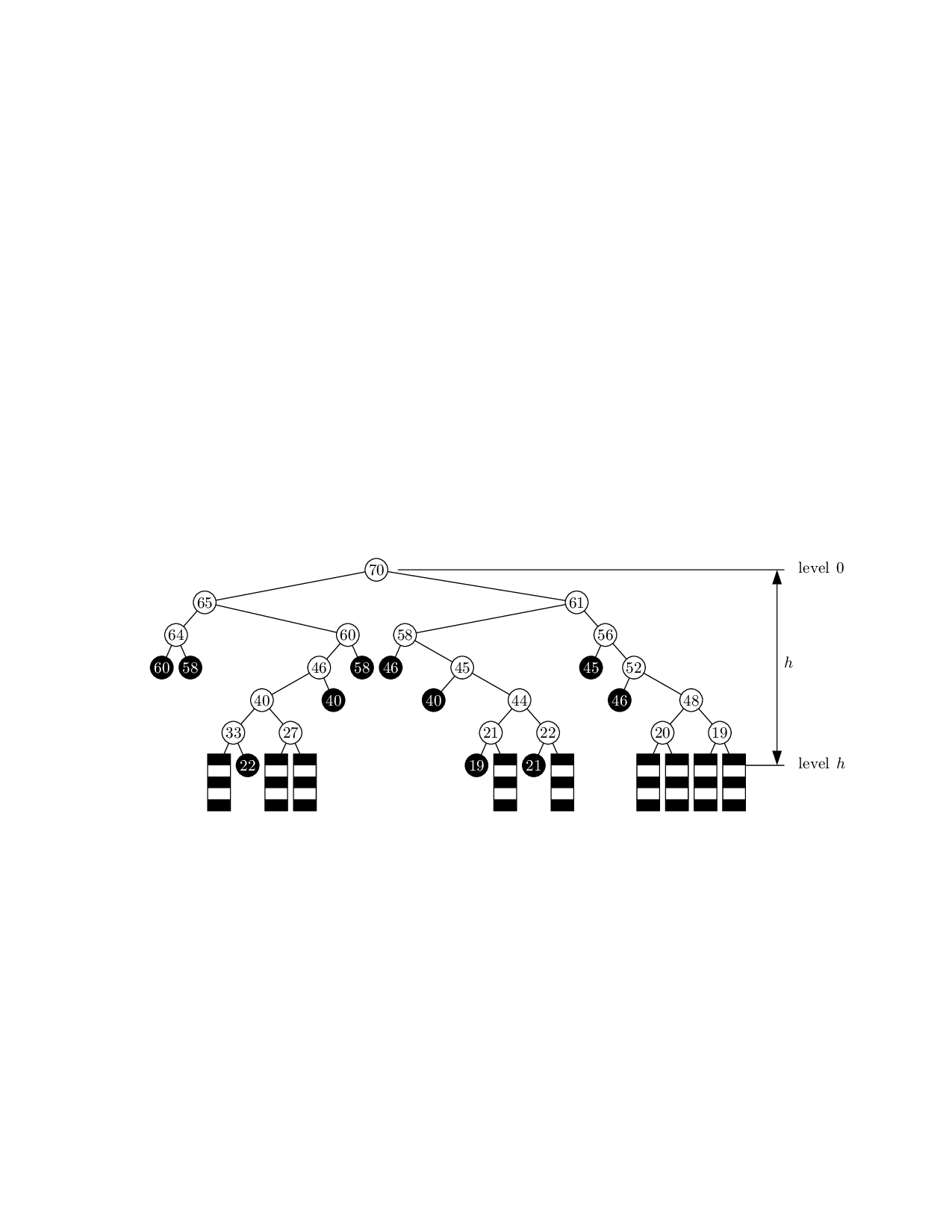}}
 \caption{A witness tree of height $h$ which is a
truncated history tree without gray nodes. The boxes at the lowest
level are block nodes. They represent selected blocks with load of
at least $\xi$. The load of the block that contains key 70 is at
least $h+\xi$.}\label{figwit2}
\end{center}
\end{figure}

\subsubsection*{The Witness Tree}

Let $\xi \in \mathbb{N}$ be a fixed integer to be picked later. For positive integers $h$ and $k$,
where $h + \xi \leq k \leq m$, a \emph{witness tree} $W_k(h)$ is a truncated history tree of a key
in the set $\intSet{k}$, with $h+1$ levels of nodes (thus, of height $h$) and with two types of
leaf nodes, black nodes and ``block" nodes. This means that each internal node has two children,
and the node labels belong to the set $\intSet{k}$. Each black leaf has a label of an internal node
that precedes it in \textsc{bfs} order. Block nodes are unlabelled nodes that represent blocks with
load of at least $\xi$. Block nodes must all be at the furthest level from the root, and there is
at least one such node in a witness tree. Notice that every witness tree is deterministic. An
example of a witness tree is shown in Figure \ref{figwit2}.

Let $\mathcal{W}_k (h,w,b)$ denote the class of all witness trees $W_k(h)$ of height $h$ that have
$w \geq 1$ white (internal) nodes, and $b \leq w$ black nodes (and thus $w-b+1$ block nodes).
Notice that, by definition, the class $\mathcal{W}_k (h,w,b)$ could be empty, e.g., if $w<h$, or $w
\geq 2^h$. However, $\abs{\mathcal{W}_k (h,w,b)} \leq 4^w 2^{w+1} w^b  k^w$, which is due to the
following. Without the labelling, there are at most $4^w$ different shape binary trees, because the
shape is determined by the $w$ internal nodes, and hence, the number of trees is the Catalan number
$\binom{2w}{w}/(w+1) \leq 4^w$. Having fixed the shape, each of the leaves is of one of two types.
Each black leaf can receive one of the $w$ white node labels. Each of the white nodes gets one of
$k$ possible labels.

Note that, unlike the full history tree, not every key has a witness tree $W_k(h)$: the key must be
placed into a block of load of at least $h+\xi-1$ just before the insertion time. We say that a
witness tree $W_k(h)$ \emph{occurs}, if upon execution of algorithm \textsc{WalkFirst}, the random
choices available for the keys represented by the witness tree are actually as indicated in the
witness tree itself. Thus, a witness tree of height $h$ exists if and only if there is a key that
is inserted into a block of load of at least $h+ \xi-1$ before the insertion.

Before we embark on the proof of Theorem \ref{walkTher}, we highlight three important facts whose
proofs are provided in the appendix. First, we bound the probability that a valid witness tree
occurs.

\begin{Lem}\label{WitnessPro}
Let $D$ denote the event that the number of blocks in \textsc{WalkFirst}$(n,m)$ with load of at
least $\xi$, after termination, is at most $n/(a \beta_3 \xi)$, for some constant $a > 0$. For $k
\in \intSet{m}$, let $A_k$ be the event that after the insertion of $k$ keys, none of the blocks is
fully loaded. Then for any positive integers $h, w$ and $k \geq h + \xi$, and a non-negative
integer $b \leq w$, we have
\[ \sup_{W_k(h) \in \mathcal{W}_k(h,w,b)}
        \prob{W_k(h) ~\text{occurs} \; | \; A_{k-1} \cap D }
             \leq \frac{4^w \beta_3^{w+b-1}}{(a \xi)^{w-b+1} n^{w+b-1}} \, .
\]
\end{Lem}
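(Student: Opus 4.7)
For a fixed witness tree $W = W_k(h) \in \mathcal{W}_k(h,w,b)$, the plan is to write the event ``$W$ occurs'' as an intersection of simple per-node events and to bound the resulting probability by a sequential tower-property unrolling. For each white internal node $\tau$ of $W$, let $E_\tau$ be the event that the pair of blocks $(X_\tau, Y_\tau)$ harboring the two terminal cells of key $\tau$ matches the specification at $\tau$: a white or black child labelled $\sigma$ prescribes the specific block containing $\sigma$ at time $\tau$, while a block-leaf child prescribes only that the corresponding terminal block have load at least $\xi$ at time $\tau$. Dropping auxiliary requirements (such as that a white child must be the \emph{last} key inserted into its block) only enlarges the event, so it suffices to upper bound $\prob{\bigcap_\tau E_\tau \cap A_{k-1} \cap D}$.

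Two deterministic observations drive the per-node estimate. First, on $A_{\tau-1}$ every block has at least one empty cell, so a linear probe starting from cell $I$ halts inside block $\lambda(I)$ or $\lambda(I)+1$; thus a uniform initial cell produces a terminal block equal to any prescribed target with probability at most $2\beta_3/n$. Second, letting $D_\tau$ be the event that at time $\tau$ at most $n/(a\beta_3\xi)$ blocks already have load $\geq \xi$, we have $D \subseteq D_\tau$ for every $\tau \leq m$ by monotonicity of the block loads; on $D_\tau$ the probability that a uniform initial cell yields a terminal block in this heavy set is at most $(n/(a\beta_3\xi))(2\beta_3/n) = 2/(a\xi)$. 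Because the two initial cells of any one key are conditionally independent given the prior history, the per-cell bounds multiply: with $\mathcal{G}_\tau$ the $\sigma$-algebra generated by all randomness preceding key $\tau$ and $c_1(\tau), c_2(\tau)$ the numbers of specific-block and block-leaf children of $\tau$ respectively, one obtains, on $A_{\tau-1} \cap D_\tau$,
\[
\Cprob{E_\tau}{\mathcal{G}_\tau}
   \;\leq\; p_\tau
   \;:=\; \pran{\frac{2\beta_3}{n}}^{c_1(\tau)} \pran{\frac{2}{a\xi}}^{c_2(\tau)}.
\]

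Enumerating the white nodes in increasing insertion order $\tau_1 < \cdots < \tau_w$ and using the monotone containments $A_{k-1} \subseteq A_{\tau_i-1}$ and $D \subseteq D_{\tau_i}$ (which render the step-$i$ conditioning event $\mathcal{G}_{\tau_i}$-measurable), repeated application of the tower property along the filtration $(\mathcal{G}_{\tau_i})$ telescopes the joint bound into $\prod_i p_{\tau_i}$. It remains to count edges by type: $W$ has $w$ internal nodes, $b$ black leaves, and $w-b+1$ block leaves, so $\sum_\tau c_1(\tau) = (w-1)+b$ (non-root white plus black children) and $\sum_\tau c_2(\tau) = w-b+1$. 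Multiplying,
\[
\prod_{i=1}^w p_{\tau_i}
    = \pran{\frac{2\beta_3}{n}}^{w-1+b}\pran{\frac{2}{a\xi}}^{w-b+1}
    = \frac{4^w\,\beta_3^{w+b-1}}{(a\xi)^{w-b+1}\,n^{w+b-1}},
\]
which is exactly the right-hand side of the lemma; passing from joint to conditional probability introduces the harmless factor $1/\prob{A_{k-1} \cap D} = 1+o(1)$ that is absorbed into the outer argument for Theorem~\ref{walkTher}. The main technical hurdle is this reconciliation of the global conditioning event $A_{k-1} \cap D$ with the local, time-$\tau$ probability estimates, which is resolved precisely by the monotonicity of $A_t$ and of the load-count, allowing each step of the unrolling to see only the $\sigma$-algebra $\mathcal{G}_{\tau_i}$.
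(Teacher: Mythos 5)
Your proof follows the same strategy as the paper's: process the nodes of the witness tree in BFS order, estimate the probability of a specified child (white or black) by $2\beta_3/n$ using the observation that on $A_{\tau-1}$ each linear probe terminates within its own block or the next one, estimate the probability of a block-leaf child by $\bigl(n/(a\beta_3\xi)\bigr)\cdot(2\beta_3/n)=2/(a\xi)$ using the budget from $D$, and multiply over the $w+b-1$ edges leading into white/black nodes and the $w-b+1$ edges leading into block leaves. The counting and the two per-node estimates match the paper exactly, and your tower-property phrasing via the filtration $(\mathcal{G}_{\tau_i})$, using $A_{k-1}\subseteq A_{\tau_i-1}$ and $D\subseteq D_{\tau_i}$ to make the conditioning $\mathcal{G}_{\tau_i}$-measurable, is a cleaner account of what the paper does informally.

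However, your final step contains a genuine gap. The tower-property unrolling you describe correctly yields $\prob{\bigcap_\tau E_\tau \cap A_{k-1}\cap D}\leq \prod_i p_{\tau_i}$, but the lemma asserts a bound on the \emph{conditional} probability $\Cprob{W_k(h)\text{ occurs}}{A_{k-1}\cap D}$, which is the joint probability divided by $\prob{A_{k-1}\cap D}$. You wave this factor away as ``$1+o(1)$'', but that claim is circular: showing $\prob{A_{k-1}\cap D}=1-o(1)$ is, up to Lemma~\ref{highBlocks}, precisely Theorem~\ref{walkTher}, which the lemma is being used to prove. You cannot invoke the conclusion of the outer argument inside the proof of one of its ingredients. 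The correct observation is different and simpler: in the proof of Theorem~\ref{walkTher} the quantity actually summed is $\prob{A_k^c\,|\,A_{k-1}\cap D}\,\prob{A_{k-1}\cap D}=\prob{A_k^c\cap A_{k-1}\cap D}$, so the joint-probability bound you \emph{do} establish is exactly what is needed; no lower bound on $\prob{A_{k-1}\cap D}$ is required, and the conditional-probability phrasing of the lemma is, in effect, a convenience. (The paper's own proof glosses over the same point by treating the conditioning on $A_{k-1}\cap D$ as though it leaves each key's two uniform initial cells uniform; your filtration-based account is the more defensible reading, but then the lemma should either be restated with the joint probability or the extra factor should be eliminated by the observation above rather than by an unproved $1+o(1)$ claim.)
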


The next lemma asserts that the event $D$ in Lemma \ref{WitnessPro} is most likely to be true, for
sufficiently large $\xi < \beta_3$.

\begin{Lem}\label{highBlocks}
Let $\alpha$, $\delta$, and $\beta_3$ be as defined in Theorem
\ref{walkTher}. Let $N$ be the number of blocks with load of at least $\xi$
upon termination of algorithm \textsc{WalkFirst}$(n,m)$. If $\xi \geq \delta
\beta_3$, then $\prob{N \geq n/(a \beta_3 \xi) } = o(1)$, for any constant
$a>0$.
\end{Lem}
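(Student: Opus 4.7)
The plan is to reduce the analysis to a two-choice balls-in-bins allocation on the blocks, invoke Theorem \ref{BerenTher1}, and close with Markov's inequality. Viewing each block as a bin, the block-load process of \textsc{WalkFirst}$(n,m)$ is naturally coupled with \textsc{GreedyMC}$(s,m,2)$ on $s:=n/\beta_3$ bins: each key chooses two initial cells uniformly at random, hence two candidate blocks, and is inserted into the less-loaded one. This is exactly the setting in which Theorem \ref{BerenTher1} was applied to prove Theorem \ref{localLinearTher}.

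With $s=n/\beta_3$ and $m=\floor{\alpha n}$, Theorem \ref{BerenTher1} (and the underlying layered induction of Azar et al.) bounds the expected number of bins with load at least $k$ by a function of $n$ that decays doubly-exponentially in $k-m/s$ once the threshold $m/s + \log_2\log s$ is crossed. In our setting the average block load is $m/s = \alpha\beta_3$, while $\xi \geq \delta\beta_3$ with $\delta>2\alpha$, so the gap $\xi-m/s \geq (\delta-\alpha)\beta_3$ grows with $\log\log n$. Consequently $\Exp{N}$ is much smaller than $n/(a\beta_3\xi)$—in fact super-polynomially so. Markov's inequality then yields
\[ \prob{N \geq \frac{n}{a\beta_3\xi}} \;\leq\; \frac{a\beta_3\xi\,\Exp{N}}{n} \;=\; o(1), \]
for every fixed constant $a>0$.

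The main technical obstacle is justifying the coupling: in \textsc{WalkFirst} the candidate blocks are the blocks of the \emph{terminal} cells $U_x,V_x$, not of the initial cells $I_x,J_x$, so linear probing can push a key forward into the next block whenever the initial cell sits inside a crowded right-boundary segment. I would handle this by a monotone domination argument: conditional on not already having a fully loaded block (the very event we are verifying), an initial cell lying in a block of load $L$ overflows to the next block only when it falls in the rightmost occupied run, an event of conditional probability at most $L/\beta_3$. Since we may assume by induction on time that $L = O(\beta_3)$, this overflow contributes only a multiplicative constant correction to the per-step probability used in the layered induction, so the bound of Theorem \ref{BerenTher1} transfers to \textsc{WalkFirst} up to constants absorbed into $a$. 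Alternatively, one may re-run the layered induction directly inside \textsc{WalkFirst}, replacing the step $\Exp{N_{k}(t+1)-N_k(t)\mid N_{k-1}(t)} \leq (N_{k-1}(t)/s)^2$ by the same bound with a small additive shift term accounting for overflow; the conclusion is identical.
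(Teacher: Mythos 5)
Your strategy---couple \textsc{WalkFirst} to \textsc{GreedyMC}$(n/\beta_3,m,2)$, bound $\Exp{N}$ via the layered induction, and finish with Markov---is not what the paper does, and more importantly the coupling you gesture at has a real gap that I do not think is repairable by the hand-waves you offer.

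The obstacle you yourself flag is the crux: \textsc{WalkFirst} compares the loads of the blocks containing the \emph{terminal} cells, not the initial cells, and the terminal block can slide forward by overflow. Your first fix (``overflow contributes only a multiplicative constant correction'') is circular and incomplete. You condition on $A_{k-1}$ (no block yet full) and then ``assume by induction on time that $L=O(\beta_3)$,'' but $A_{k-1}$ only gives $L \le \beta_3-1$, so the overflow probability $L/\beta_3$ is bounded only by a constant close to $1$, not by something small---there is no constant-factor slack to absorb into $a$. Worse, overflow is not a symmetric perturbation: it systematically shifts probability mass onto the block immediately after a crowded one, which is exactly the block most likely to already be loaded, so the effective two-choice distribution is biased toward heavy bins. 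That breaks the monotone domination you need; a stochastic domination of \textsc{WalkFirst} by \textsc{GreedyMC} is genuinely unclear and you do not construct one. Your second fix (``re-run the layered induction with a small additive shift'') is stated without specifying the shift or checking that the doubly-exponential recursion survives an additive error of the size actually incurred; the recursion $\Exp{N_k(t+1)-N_k(t)\mid\cdot}\le(N_{k-1}(t)/s)^2$ is fragile precisely because the quadratic decay is what drives the $\log\log$ bound, and an additive perturbation of order even $1/s$ per step would wash it out. A further sign of trouble: Theorem \ref{BerenTher1} as stated is a high-probability bound on the \emph{maximum} bin load, not a bound on the \emph{number} of bins above a threshold, so you are implicitly appealing to internal lemmas of that proof rather than to anything quoted in this paper.

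The paper's actual proof is entirely different and much more elementary. It fixes a reference block $B$ (the last one), uses exchangeability to write $\Exp{N}=(n/\beta_3)\,\prob{L\ge\xi}$ where $L$ is the load of $B$, and then exploits the cluster structure of linear probing: if $S$ is the maximal occupied run immediately to the left of $B$, every key occupying $S$ must have had an initial cell inside $S$ (the left endpoint of $S$ is empty). This gives $\prob{|S|=j}\le\prob{\binomial(m,2j/n)\ge j}\le c^j$ with $c=2\alpha e^{1-2\alpha}<1$, which is exactly where $\alpha<1/2$ enters. Combined with $L\le\nu(S\cup B)$ and a binomial tail bound with $\xi\ge\delta\beta_3$ (and $\delta>2\alpha$ giving the slack $y=1/2+\delta/(4\alpha)>1$), one gets $\prob{L\ge\xi}\le 2/\xi^2$ and then Markov. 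Notice that in this argument, the role of $\alpha<1/2$ is sharp and explicit, whereas in your proposal its role is opaque---another indication that the two arguments are not doing the same thing. If you want to salvage a balls-in-bins reduction you would need an actual domination lemma relating the terminal-block choice to a uniform choice, and I do not see how to prove one; the paper sidesteps the issue entirely by working with the cluster geometry of the hash table directly.
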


Lemma \ref{ImpLem} addresses a simple but crucial fact. If the height of a witness tree $W_k(h) \in
\mathcal{ W}_k (h,w,b)$ is $h \geq 2$, then the number of white nodes $w$ is at least two, (namely,
the root and its left child); but what can we say about $b$, the number of black nodes?

\begin{Lem} \label{ImpLem}
In any witness tree $W_k(h) \in \mathcal{ W}_k (h,w,b)$, if $h \geq 2$ and $ w \leq 2^{h- \eta}$,
where $\eta \geq 1$, then the number $b$ of black nodes  is $ \geq \eta $, i.e., $\ind{[b \geq
\eta] \cup [w>2^{h- \eta}]} = 1$.
\end{Lem}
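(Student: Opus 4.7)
The plan is to first establish the stronger structural inequality $B \geq 2^{h-b}$, where $B$ denotes the number of block leaves (all at level $h$) in any valid witness tree of height $h$ with $b$ black leaves. Since every binary tree with $w$ internal nodes has exactly $w+1$ leaves, this yields $w = B + b - 1 \geq 2^{h-b} + b - 1$, from which Lemma \ref{ImpLem} follows by a short algebraic computation.

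To prove $B \geq 2^{h-b}$, I would induct on $h$; the base $h = 0$ (the tree is a single block leaf, so $B = 1$, $b = 0$) is trivial. For the inductive step with $h \geq 1$, the root is white (since the tree contains a block leaf at positive depth) and has two subtrees $T_L$ and $T_R$. Each subtree is of one of two kinds: a \emph{valid} subtree containing a block leaf at relative depth $h-1$, or a \emph{black} subtree containing only black leaves (hence at least one such leaf). Since $T$ contains a block leaf, at least one of $T_L, T_R$ is valid; call it $T_L$. If $T_R$ is also valid, the inductive hypothesis together with AM--GM yields
\[
B = B_L + B_R \geq 2^{h-1-b_L} + 2^{h-1-b_R} \geq 2 \cdot 2^{h-1-(b_L + b_R)/2} = 2^{h-b/2} \geq 2^{h-b}.
\]
If instead $T_R$ is black with $b_R \geq 1$ black leaves, applying the hypothesis only to $T_L$ gives
\[
B = B_L \geq 2^{h-1-b_L} = 2^{h-b+b_R-1} \geq 2^{h-b}.
\]

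Having proved $w \geq 2^{h-b} + b - 1$, I would conclude by contradiction. Assume $w \leq 2^{h-\eta}$, $b \leq \eta - 1$, $h \geq 2$, $\eta \geq 1$. If $b = 0$ then $w \geq 2^h - 1 > 2^{h-\eta}$ (using $h \geq 2$ and $\eta \geq 1$), contradicting $w \leq 2^{h-\eta}$. If $b \geq 1$ then $b \leq \eta - 1$ gives $2^{h-b} \geq 2 \cdot 2^{h-\eta}$, hence $w \geq 2 \cdot 2^{h-\eta} + b - 1 > 2^{h-\eta}$, again a contradiction. Therefore $b \geq \eta$.

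The main obstacle I anticipate is handling the case in the induction where one child subtree contains internal white nodes but no block leaves (a non-trivial ``black'' subtree, not merely a single black leaf). Such a subtree is easy to overlook, yet necessarily has $b_R \geq 2$ black leaves, which more than compensates for the loss of the doubling benefit in the inductive bound, so the argument still goes through cleanly.
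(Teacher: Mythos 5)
Your proof is correct and follows the same route as the paper's: establish that the number of block nodes is at least $2^{h-b}$, use the leaf count $w-b+1$ to turn that into a bound on $w$, and derive the contradiction from $w \leq 2^{h-\eta}$. The paper states the inequality ``block nodes $\geq 2^{h-b}$'' without proof, whereas you supply a complete inductive argument (correctly handling the subtle case of a non-trivial subtree containing only black leaves), so your write-up is more detailed but not logically different.
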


\paragraph{Proof of Theorem \ref{walkTher}.}
Recall that $A_k$, for $k \in \intSet{m}$, is the event that after
the insertion of $k$ keys (i.e., at time $k$), none of the blocks is
fully loaded. Notice that $A_m \subseteq A_{m-1} \subseteq \cdots
\subseteq A_1$, and the event $A_{\beta_3-1}$ is deterministically
true. We shall show that $\prob{A_m^c} = o(1)$. Let $D$ denote the
event that the number of blocks with load of at least $\xi$, after
termination, is at most $n/(a \beta_3 \xi)$, for some constant $a
>1$ to be decided later. Observe that
\begin{eqnarray*}
\prob{A_m^c}
    &\leq& \prob{D^c} +  \prob{A_m^c \, | \, D } \\
    &\leq& \prob{D^c} + \prob{A_m^c \, | \, A_{m-1} \cap D}
                + \prob{A_{m-1}^c \, | \, D}  \\
    &\vdots&   \\
    &\leq& \prob{D^c} +
               \sum_{k=\beta_3}^m \prob{A_k^c \, | \, A_{k-1} \cap D} \, .
\end{eqnarray*}
Lemma \ref{highBlocks} reveals that $\prob{D^c}=o(1)$, and hence, we only need to demonstrate that
$p_k := \prob{A_k^c \, | \, A_{k-1} \cap D} = o(1/n)$, for $k = \beta_3, \ldots, m$. We do that by
using the witness tree argument. Let $h, \xi, \eta \in [2, \infty)$ be some integers to be picked
later such that $h + \xi \leq \beta_3$. If after the insertion of $k$ keys, there is a block with
load of at least $h+ \xi$, then a witness tree $W_k(h)$ (with block nodes representing blocks with
load of at least $\xi$) must have occurred. Recall that the number of white nodes $w$ in any
witness tree $W_k(h)$ is at least two. Using Lemmas \ref{WitnessPro} and \ref{ImpLem}, we see that
\begin{eqnarray*}
p_k
  &\leq& \sum_{W_k(h)} \prob{ W_k(h) ~\text{occurs}
                             \, | \, A_{k-1} \cap D } \\
  &\leq& \sum_{w=2}^{ 2^h-1} \sum_{b=0}^w ~
        \sum_{W_k(h) \in \mathcal{ W}_k (h,w,b)}
               \prob{ W_k(h) ~\text{occurs} \, | \, A_{k-1}  \cap D} \\
  &\leq& \sum_{w=2}^{ 2^h-1}  \sum_{b=0}^w \, \abs{\mathcal{W}_k (h,w,b)}
          \, \sup_{W_k(h) \in \mathcal{ W}_k (h,w,b)}
              \prob{ W_k(h) ~\text{occurs} \, | \, A_{k-1}  \cap D} \\
 &\leq& \sum_{w=2}^{2^h} \sum_{b=0}^w
          \frac{2^{w+1} 4^{2w} w^b  k^w \beta_3^{w+b-1} }{ (a \xi)^{w-b+1}
          \, n^{w+b-1} } \ind{[b \geq \eta] \cup [w>2^{h- \eta}]} \\
  &\leq& \frac{2n}{a \xi \beta_3} \sum_{w=2}^{2^h}
        \pran{ \frac{32 \alpha  \beta_3}{a \xi} }^w
          \sum_{b=0}^w \pran{\frac{a w \xi \beta_3}{n}}^b
              \ind{[b \geq \eta] \cup [w>2^{h- \eta}]} \, .
\end{eqnarray*}
Note that we disallow $b=w+1$, because any witness tree has at least one block node. We split the
sum over $w \leq 2^{h- \eta}$, and $w > 2^{h- \eta}$. For $w \leq 2^{h- \eta}$, we have $b \geq
\eta$, and thus
\begin{eqnarray*}
\sum_{b =0}^w \pran{\frac{a w \xi \beta_3}{n}}^b
              \ind{[b \geq \eta] \cup  [w>2^{h- \eta}]}
  &=&   \sum_{b= \eta}^w \pran{\frac{a w \xi \beta_3}{n}}^b \\
 &\leq&  \pran{\frac{a w \xi \beta_3}{n}}^{\eta} \sum_{b=0}^{\infty}
                    \pran{\frac{a w \xi \beta_3}{n}}^b \\
 &<&  2 \pran{\frac{a w \xi \beta_3}{n}}^{\eta} \, ,
\end{eqnarray*}
provided that $n$ is so large that $a 2^{h+1} \xi \beta_3 \leq n$, (this insures that $a w \xi
\beta_3/n < 1/2$). For $w \in (2^{h- \eta}, \, 2^h]$, we bound trivially, assuming the same large
$n$ condition:
\[  \sum_{b=0}^w \pran{ \frac{a w \xi \beta_3}{n}}^b \leq 2 \, .  \]
In summary, we see that
\[  p_k \leq 4n  \sum_{ w > 2^{h- \eta}}
           \pran{\frac{32 \alpha  \beta_3}{a \xi} }^w +
    4 \pran{\frac{a \xi \beta_3}{ n}}^{\eta -1} \, \sum_{w=2}^{2^{h- \eta} }
               \pran{\frac{32 \alpha  \beta_3 }{a \xi} }^w  w^{\eta} \, .
\]
We set $a=32$, and $\xi = \ceil{ \delta \beta_3}$, so that $32 \alpha
\beta_3 /(a \xi) \leq 1/2$, because $\delta \in (2 \alpha, 1)$. With this
choice, we have
\[  p_k \leq \frac{ 4n }{2^{2^{h- \eta}} }
        + 4c \pran{\frac{32 \beta_3^2}{n}}^{\eta -1}       \, ,  \]
where $c = \sum_{w \geq 2} w^{\eta} /2^w$. Clearly, if we put $h = \eta + \ceil{ \log_2 \log_2
n^{\eta} }$, and $\eta=3$, then we see that $h+ \xi \leq \beta_3$, and $p_k = o(1/n)$. Notice that
$h$ and $\xi$ satisfy the technical condition $a 2^{h+1} \xi \beta_3 \leq n$, asymptotically.
\hfill$\square$

\begin{Rem}
The restriction on $\alpha$ is needed only to prove Lemma \ref{highBlocks} where the binomial tail
inequality is valid only if $\alpha < 1/2$. Simulation results, as we show next, suggest that a
variant of Theorem \ref{walkTher} might hold for any $\alpha \in (0,1)$ with block  size
$\floor{(1-\alpha )^{-1} \log_2 \log n}$.
\end{Rem}

\subsection*{Tradeoffs}

We have seen that by using two linear probe sequences instead of just one, the maximum unsuccessful
search time decreases exponentially from $O(\log n)$ to $O(\log \log n)$. The average search time,
however, could at worst double as shown in the simulation results. Most of the results presented in
this article can be improved, by a constant factor though, by increasing the number of hashing
choices per key. For example, Theorems \ref{universalLower1} and \ref{universalLower2} can be
easily generalized for open addressing hashing schemes that use $d \geq 2$ linear probe sequences.
Similarly, all the two-way linear probing algorithms we design here can be generalized to $d$-way
linear probing schemes. The maximum unsuccessful search time will, then, be at most $d \, C \log_d
\log n + O(d)$, where $C$ is a constant depending on $\alpha$. This means that the best worst-case
performance is when $d=3$ where the minimum of $d / \log d$ is attained. The average search time,
on the other hand, could triple.

The performance of these algorithms can be further improved by using V\"ocking's scheme
\textsc{LeftMC}$(n, m, d)$, explained in Section \ref{MultipleSec}, with $d \geq 2$ hashing
choices. The maximum unsuccessful search time, in this case, is at most $C \log \log n / \log
\phi_d + O(d)$, for some constant $C$ depending on $\alpha$. This is minimized when $d=o(\log \log
n)$, but we know that it can not get better than $C\log_2 \log n + O(d)$, because $\lim_{d \to
\infty} \phi_d = 2$.

\section{Simulation Results} \label{SimulationSec}

We simulate all linear probing algorithms we discussed in this article with the \textsc{fcfs}
replacement strategy: the classical linear probing algorithm \textsc{ClassicLinear}, the locally
linear algorithm \textsc{LocallyLinear}, and the two-way linear probing algorithms
\textsc{ShortSeq}, \textsc{SmallCluster}, \textsc{WalkFirst}, and \textsc{DecideFirst}. For each
value of $n \in \set{2^8, 2^{12}, 2^{16}, 2^{20}, 2^{22}}$, and constant $\alpha \in \set{0.4,
0.9}$, we simulate each algorithm 1000 times divided into 10 iterations (experiments). Each
iteration consists of 100 simulations of the same algorithm where we insert $\floor{\alpha n}$ keys
into a hash table with $n$ cells. In each simulation we compute the average and the maximum
successful search and insert times. For each iteration (100 simulations), we compute the average of
the average values and and the average of the maximum values computed during the 100 simulations
for the successful search and insert times. The overall results are finally averaged over the 10
iterations and recorded in the next Tables. Similarly, the average maximum cluster size is computed
for each algorithm as it can be used to bound the maximum unsuccessful search time, as mentioned
earlier. Notice that in the case of the algorithms \textsc{ClassicLinear} and \textsc{ShortSeq},
the successful search time is the same as the insertion time.

Tables \ref{SimData1} and \ref{SimData2} contain the simulation results of the algorithms
\textsc{ClassicLinear}, \textsc{ShortSeq}, and \textsc{SmallCluster}. With the exception of the
average insertion time of \textsc{SmallCluster} Algorithm, which is slightly bigger than
\textsc{ClassicLinear} Algorithm, it is evident that the average and the worst-case performances of
\textsc{SmallCluster} and \textsc{ShortSeq} are better than \textsc{ClassicLinear}. Algorithm
\textsc{SmallCluster} seems to have the best worst-case performance among the three algorithms.
This is not a total surprise to us, because the algorithm considers more information (relative to
the other two) before it makes its decision of where to insert the keys. It is also clear that
there is a nonlinear increase, as a function of $n$, in the difference between the performances of
these algorithms. This may suggest that the worst-case performances of algorithms \textsc{ShortSeq}
and \textsc{SmallCluster} are roughly of the order of $\log n$.

\begin{table}[htbp]
 \centering
\small{ \begin{tabular}{|c|c||c|r||c|r||c|r||c|r|}  \hline
\multirow{3}{*}{$n$} &  \multirow{3}{*}{$\alpha$} & \multicolumn{2}{|c||}{\textsc{ClassicLinear}}& \multicolumn{2}{c||}{\textsc{ShortSeq}} & \multicolumn{2}{|c||}{\textsc{SmallCluster}} & \multicolumn{2}{|c|}{\textsc{SmallCluster} }\\[-5pt]
& & \multicolumn{2}{|c||}{\scriptsize{Insert/Search Time}}& \multicolumn{2}{c||}{\scriptsize{Insert/Search Time}} & \multicolumn{2}{|c||}{\scriptsize{Search Time}} & \multicolumn{2}{|c|}{\scriptsize{Insert Time} }\\
\cline{3-10}
   &  &  \makebox[9mm]{Avg} & Max  &  \makebox[9mm]{Avg} & Max & \makebox[8mm]{Avg}  & Max& \makebox[9mm]{Avg}  & Max\\    \hline \hline
 \multirow{2}{*}{$2^8$}
    & 0.4  & 1.33  & 5.75 & \textbf{\textit{1.28}} & \textit{4.57} & 1.28 &  \textbf{4.69} & 1.50 & 9.96\\[-3pt] \cline{2-2}
    & 0.9  & 4.38 & 68.15 & \textbf{\textit{2.86}} & \textit{39.72} & 3.05 & \textbf{35.69} & 6.63 & 71.84\\[-3pt] \cline{1-2}
 \multirow{2}{*}{$2^{12}$}
    & 0.4  & 1.33 & 10.66 & \textbf{\textit{1.28}} & \textit{7.35} & 1.29 & \textbf{7.49} & 1.52 & 14.29\\[-3pt] \cline{2-2}
    & 0.9  & 5.39 & 275.91 & \textbf{\textit{2.90}} & \textit{78.21} & 3.07 & \textbf{66.03} & 6.91 & 118.34 \\[-3pt]  \cline{1-2}
 \multirow{2}{*}{$2^{16}$}
    & 0.4  & 1.33 & 16.90 & \textbf{\textit{1.28}} & \textit{10.30} & 1.29 &  \textbf{10.14} & 1.52 & 18.05\\[-3pt] \cline{2-2}
    & 0.9  & 5.49 & 581.70 & \textbf{\textit{2.89}} & \textit{120.32} & 3.07 & \textbf{94.58}  & 6.92 & 155.36\\[-3pt]  \cline{1-2}
 \multirow{2}{*}{$2^{20}$}
    & 0.4  & 1.33 & 23.64 & \textbf{\textit{1.28}} & \textit{13.24} & 1.29 &  \textbf{13.03} & 1.52 & 21.41\\[-3pt]  \cline{2-2}
    & 0.9  & 5.50 & 956.02 & \textbf{\textit{2.89}} & \textit{164.54} & 3.07 & \textbf{122.65}  & 6.92 & 189.22 \\[-3pt]\cline{1-2}
 \multirow{2}{*}{$2^{22}$}
    & 0.4  & 1.33 & 26.94 & \textbf{\textit{1.28} }&\textit{14.94 } & 1.29 &  \textbf{14.44} & 1.52 & 23.33 \\[-3pt]  \cline{2-2}
    & 0.9  & 5.50 & 1157.34 & \textbf{\textit{2.89}} & \textit{188.02} & 3.07 & \textbf{136.62} & 6.93 &  205.91  \\ \hline
\end{tabular}}
  \caption{The average and the maximum successful search and insert times averaged over 10
iterations each consisting of 100 simulations of the algorithms. The best successful search time is
shown in boldface and the best insert time is shown in italic.}\label{SimData1}
\end{table}

\begin{table}[htbp]
 \centering
\small{  \begin{tabular}{|c|c||r|r||r|r||r|r|}  \hline
\multirow{2}{*}{$n$} &  \multirow{2}{*}{$\alpha$} & \multicolumn{2}{|c||}{\textsc{ClassicLinear}}& \multicolumn{2}{c||}{\textsc{ShortSeq}} & \multicolumn{2}{|c|}{\textsc{SmallCluster}} \\ \cline{3-8}
   &  &  \makebox[9mm]{Avg} & Max  &  \makebox[9mm]{Avg} & Max & \makebox[9mm]{Avg}  & Max\\    \hline \hline
 \multirow{2}{*}{$2^8$}
    & 0.4  &2.02 & 8.32   & 1.76   & 6.05  & \textbf{1.76} & \textbf{5.90}   \\[-3pt] \cline{2-2}
    & 0.9  & 15.10 & 87.63 & 12.27  & 50.19 & \textbf{12.26}& \textbf{43.84} \\[-3pt]  \cline{1-2}
 \multirow{2}{*}{$2^{12}$}
    & 0.4  & 2.03 & 14.95 & 1.75    & 9.48  &\textbf{1.75} & \textbf{9.05}  \\[-3pt] \cline{2-2}
    & 0.9  &15.17 & 337.22 & 12.35 & 106.24                   & \textbf{12.34}& \textbf{78.75} \\[-3pt]  \cline{1-2}
 \multirow{2}{*}{$2^{16}$}
    & 0.4  & 2.02 & 22.54 & 1.75    & 12.76 & \textbf{1.75} & \textbf{12.08} \\[-3pt] \cline{2-2}
    & 0.9  & 15.16 & 678.12 & 12.36 & 155.26   &\textbf{12.36}& \textbf{107.18}                   \\[-3pt]  \cline{1-2}
 \multirow{2}{*}{$2^{20}$}
    & 0.4  & 2.02 & 29.92 & 1.75    & 16.05 &\textbf{1.75} & \textbf{15.22} \\[-3pt]  \cline{2-2}
    & 0.9  & 15.17 & 1091.03    & 12.35   & 203.16 & \textbf{12.35}& \textbf{136.19}                   \\[-3pt] \cline{1-2}
 \multirow{2}{*}{$2^{22}$}
    & 0.4  & 2.02 & 33.81 & 1.75    & 17.74 &\textbf{1.75} & \textbf{16.65} \\[-3pt]  \cline{2-2}
    & 0.9  & 15.17 & 1309.04  & 12.35   & 226.44   & \textbf{12.35}& \textbf{150.23} \\ \hline
\end{tabular}}
  \caption{The average maximum cluster size and the average cluster size
over 100 simulations of the algorithms. The best performances are drawn in
boldface.}\label{SimData2}
\end{table}

The simulation data of algorithms \textsc{LocallyLinear}, \textsc{WalkFirst}, and
\textsc{DecideFirst} are presented in Tables \ref{SimData3}, \ref{SimData4} and \ref{SimData5}.
These algorithms are simulated with blocks of size $\floor{(1-\alpha )^{-1} \log_2 \log n}$. The
purpose of this is to show that, practically, the additive and the multiplicative constants
appearing in the definitions of the block sizes stated in Theorems \ref{localLinearTher},
\ref{decideTher} and \ref{walkTher} can be chosen to be small. The hash table is partitioned into
equal-sized blocks, except possibly the last one. The average and the maximum values of the
successful search time, inset time, and cluster size (averaged over 10 iterations each consisting
of 100 simulations of the algorithms) are recorded in the tables below where the best performances
are drawn in boldface.

Results show that  \textsc{LocallyLinear} Algorithm has the best performance; whereas algorithm
\textsc{WalkFirst} appears to perform better than \textsc{DecideFirst}. Indeed, the sizes of the
cluster produced by algorithm \textsc{WalkFirst} appears to be very close to that of
\textsc{LocallyLinear} Algorithm.  This supports the conjecture that Theorem \ref{walkTher} is, in
fact, true for any constant load factor $\alpha \in (0,1)$, and the maximum unsuccessful search
time of \textsc{WalkFirst} is at most $4 (1-\alpha)^{-1} \log_2 \log n + O(1)$, w.h.p. The average
maximum cluster size of algorithm \textsc{DecideFirst} seems to be close to the other ones when
$\alpha$ is small; but it almost doubles when $\alpha$ is large. This may suggest that the
multiplicative constant in the maximum unsuccessful search time established in Theorem
\ref{decideTher} could be improved.

\begin{table}[htbp]
 \centering  \addtolength{\tabcolsep}{-2pt}
\small{  \begin{tabular}{|c|c||r|r||r|r||r|r|}  \hline
\multirow{2}{*}{$n$} &  \multirow{2}{*}{$\alpha$} &   \multicolumn{2}{c||}{\textsc{LocallyLinear}} &  \multicolumn{2}{c||}{\textsc{WalkFirst}} & \multicolumn{2}{c|}{\textsc{DecideFirst}}\\ \cline{3-8}
   &  &  \makebox[9mm]{Avg} & \makebox[9mm]{Max}  &  \makebox[9mm]{Avg} & \makebox[9mm]{Max} & \makebox[9mm]{Avg}  & \makebox[9mm]{Max}\\   \hline  \hline
 \multirow{2}{*}{$2^8$}
    & 0.4   & \textbf{1.73} & \textbf{4.73}& 1.78& 5.32& 1.75& 5.26  \\[-3pt]
  \cline{2-2}
    & 0.9   & \textbf{4.76} &\textbf{36.23} & 4.76& 43.98& 5.06 & 59.69   \\[-3pt]\cline{1-2}
  \multirow{2}{*}{$2^{12}$}
    & 0.4   & \textbf{1.74} & \textbf{6.25}&1.80 &7.86 & 1.78& 7.88  \\[-3pt]\cline{2-2}
    & 0.9   & \textbf{4.76} &\textbf{ 47.66}& 4.80& 67.04& 4.94& 108.97  \\[-3pt]\cline{1-2}
 \multirow{2}{*}{$2^{16}$}
    & 0.4   &\textbf{1.76} & \textbf{7.93}& 1.80& 9.84& 1.78&  10.08 \\[-3pt] \cline{2-2}
    & 0.9   & \textbf{4.78} & \textbf{56.40}& 4.89& 89.77& 5.18&137.51   \\[-3pt] \cline{1-2}
 \multirow{2}{*}{$2^{20}$}
    & 0.4   & \textbf{1.76}& \textbf{8.42}& 1.81& 12.08& 1.79& 12.39  \\[-3pt]  \cline{2-2}
    & 0.9   & \textbf{4.77} & \textbf{65.07}& 4.98& 108.24& 5.26& 162.04  \\[-3pt]\cline{1-2}
 \multirow{2}{*}{$2^{22}$}
    & 0.4   & \textbf{1.76}&  \textbf{9.18}& 1.81& 12.88&  1.79&13.37 \\[-3pt] \cline{2-2}
    & 0.9   & \textbf{4.80}& \textbf{71.69}& 5.04&118.06 &5.32 & 181.46  \\  \hline
\end{tabular}}
\addtolength{\tabcolsep}{2pt}
  \caption{The average and the maximum successful search time averaged over 10 iterations
each consisting of 100 simulations of the algorithms. The best performances are drawn in
boldface.}\label{SimData3}
\end{table}

\begin{table}[htbp]
 \centering  \addtolength{\tabcolsep}{-2pt}
\small{  \begin{tabular}{|c|c||r|r||r|r||r|r|}  \hline
\multirow{2}{*}{$n$} &  \multirow{2}{*}{$\alpha$} &   \multicolumn{2}{c||}{\textsc{LocallyLinear}} &  \multicolumn{2}{c||}{\textsc{WalkFirst}} & \multicolumn{2}{c|}{\textsc{DecideFirst}}\\ \cline{3-8}
   &  &  \makebox[9mm]{Avg} & \makebox[9mm]{Max}  &  \makebox[9mm]{Avg} & \makebox[9mm]{Max} & \makebox[9mm]{Avg}  & \makebox[9mm]{Max}\\   \hline  \hline
 \multirow{2}{*}{$2^8$}
    & 0.4   & \textbf{1.14 }& \textbf{2.78 }&  2.52 &  6.05 & 1.15  & 3.30\\[-3pt]
  \cline{2-2}
    & 0.9   & \textbf{2.89}& \textbf{22.60} & 6.19 &48.00 &3.19 &42.64   \\[-3pt]\cline{1-2}
  \multirow{2}{*}{$2^{12}$}
    & 0.4   & \textbf{1.14}&\textbf{3.38} &2.53 & 8.48&1.17 &5.19   \\[-3pt]\cline{2-2}
    & 0.9   & \textbf{2.91}& \textbf{27.22}& 6.28&69.30 &3.16 &84.52   \\[-3pt]\cline{1-2}
 \multirow{2}{*}{$2^{16}$}
    & 0.4   & \textbf{1.15}& \textbf{4.08}& 2.53& 10.40&1.17 &6.56   \\[-3pt] \cline{2-2}
    & 0.9   & \textbf{2.84}& \textbf{31.21} & 6.43 &91.21 &3.17 &106.09   \\[-3pt] \cline{1-2}
 \multirow{2}{*}{$2^{20}$}
    & 0.4   &\textbf{ 1.15}& \textbf{4.64}& 2.54& 12.58& 1.18&  8.16 \\[-3pt]  \cline{2-2}
    & 0.9   &\textbf{2.89} &\textbf{35.21} & 6.54& 109.71& 3.22&  117.42 \\[-3pt]\cline{1-2}
 \multirow{2}{*}{$2^{22}$}
    & 0.4   & \textbf{1.15}& \textbf{4.99}& 2.54& 13.41& 1.18& 8.83  \\[-3pt] \cline{2-2}
    & 0.9   & \textbf{2.91}&\textbf{38.75} & 6.61& 119.07& 3.26&  132.83 \\  \hline
\end{tabular}}
\addtolength{\tabcolsep}{2pt}
  \caption{The average and the maximum insert time averaged over 10 iterations each consisting
of 100 simulations of the algorithms. The best performances are drawn in boldface.}\label{SimData4}
\end{table}

\begin{table}[htbp]
 \centering  \addtolength{\tabcolsep}{-2pt}
\small{  \begin{tabular}{|c|c||r|r||r|r||r|r|}  \hline
\multirow{2}{*}{$n$} &  \multirow{2}{*}{$\alpha$} &   \multicolumn{2}{c||}{\textsc{LocallyLinear}} &  \multicolumn{2}{c||}{\textsc{WalkFirst}} & \multicolumn{2}{c|}{\textsc{DecideFirst}}\\ \cline{3-8}
   &  &  \makebox[9mm]{Avg} & \makebox[9mm]{Max}  &  \makebox[9mm]{Avg} & \makebox[9mm]{Max} & \makebox[9mm]{Avg}  & \makebox[9mm]{Max}\\   \hline  \hline
 \multirow{2}{*}{$2^8$}
    & 0.4   & \textbf{1.57} & \textbf{4.34}   &  1.65   & 4.70    & 1.63    &  4.81     \\[-3pt]
  \cline{2-2}
    & 0.9   & \textbf{12.18}   & \textbf{33.35}     & 12.54   &  34.40      & 13.48  &  47.76    \\[-3pt] \cline{1-2}
  \multirow{2}{*}{$2^{12}$}
    & 0.4   &\textbf{1.62} & \textbf{6.06}         & 1.68   &  6.32  &  1.68        & 6.82  \\[-3pt] \cline{2-2}
    & 0.9   & \textbf{12.42} & \textbf{48.76}     & 12.78                       &  51.80 &  13.45     & 94.98                 \\[-3pt] \cline{1-2}
 \multirow{2}{*}{$2^{16}$}
    & 0.4   & \textbf{1.62}    & \textbf{7.14}    & 1.68 &  7.31 &  1.68      &  8.92        \\ [-3pt] \cline{2-2}
    & 0.9   & \textbf{12.66}   & \textbf{59.61}   & 12.98    &  62.24  & 13.53  &  125.40     \\[-3pt] \cline{1-2}
 \multirow{2}{*}{$2^{20}$}
    & 0.4   & \textbf{1.65}    & \textbf{8.25}     & 1.71   &  8.50  & 1.71      &  10.76      \\[-3pt]  \cline{2-2}
    & 0.9   & \textbf{12.83}  & \textbf{67.23}     & 13.11  &  69.45  & 13.62    & 145.30       \\[-3pt]\cline{1-2}
 \multirow{2}{*}{$2^{22}$}
    & 0.4   & \textbf{1.62}      & \textbf{8.90}   & 1.71  &  8.95 & 1.71   & 11.46     \\[-3pt]  \cline{2-2}
    & 0.9   & \textbf{12.72}  & \textbf{65.58}    & 13.19&  73.22  &  13.66   &  164.45      \\ \hline
\end{tabular}}
\addtolength{\tabcolsep}{2pt}
  \caption{The average and the maximum cluster sizes averaged over 10 iterations each consisting
of 100 simulations of the algorithms. The best performances are drawn in boldface.}\label{SimData5}
\end{table}

Comparing the simulation data from all tables, one can see that the best average performance is
achieved by the algorithms \textsc{LocallyLinear} and \textsc{ShortSeq}. Notice that
\textsc{ShortSeq} Algorithm achieves the best average successful search time when $\alpha=0.9$. The
best (average and maximum) insertion time is achieved by the algorithm \textsc{LocallyLinear}. On
the other hand, algorithms \textsc{WalkFirst} and \textsc{LocallyLinear} are superior to the others
in worst-case performance. It is worth noting that surprisingly, the worst-case successful search
time of algorithm \textsc{SmallCluster} is very close to the one achieved by \textsc{WalkFirst} and
better than that of \textsc{DecideFirst}, although, it appears that the difference becomes larger,
as $n$ increases.

\section*{Appendix}

For completeness, we prove the lemmas used in the proof of Theorem \ref{walkTher}.

\begin{Lem4}
Let $D$ denote the event that the number of blocks in \textsc{WalkFirst}$(n,m)$ with load of at
least $\xi$, after termination, is at most $n/(a \beta_3 \xi)$, for some constant $a > 0$. For $k
\in \intSet{m}$, let $A_k$ be the event that after the insertion of $k$ keys, none of the blocks is
fully loaded. Then for any positive integers $h, w$ and $k \geq h + \xi$, and a non-negative
integer $b \leq w$, we have
\[ \sup_{W_k(h) \in \mathcal{W}_k(h,w,b)}
        \prob{W_k(h) ~\text{occurs} \; | \; A_{k-1} \cap D }
             \leq \frac{4^w \beta_3^{w+b-1}}{(a \xi)^{w-b+1} n^{w+b-1}} \, .
\]
\end{Lem4}
\begin{proof}
Notice first that given $A_{k-1}$, the probability that any fixed key in the set $\intSet{k}$
chooses a certain block is at most $2\beta_3/n$.  Let $W_k(h) \in \mathcal{ W}_k (h,w,b)$ be a
fixed witness tree. We compute the probability that $W_k(h)$ occurs given that $A_{k-1}$ is true,
by looking at each node in \textsc{bfs} order. Suppose that we are at an internal node, say $u$, in
$W_k(h)$. We would like to find the conditional probability that a certain child of node $u$ is
exactly as indicated in the witness tree, given that $A_{k-1}$ is true, and everything is revealed
except those nodes that precede $u$ in the \textsc{bfs} order. This depends on the type of the
child. If the child is white or black, the conditional probability is not more than $2 \beta_3/n$.
This is because each key refers to the unique block that contains it, and moreover, the initial
hashing cells of all keys are independent. Multiplying just these conditional probabilities yields
$(2\beta_3/n)^{w+b-1}$, as there are $w+b-1$ edges in the witness tree that have a white or black
nodes as their lower endpoint. On the other hand, if the child is a block node, the conditional
probability is at most $2/(a \xi)$. This is because a block node corresponds to a block with load
of at least $\xi$, and there are at most $n/(a \beta_3 \xi)$ such blocks each of which is chosen
with probability of at most $2 \beta_3/n$. Since there are $w-b+1$ block nodes, the result follows
plainly by multiplying all the conditional probabilities.
\end{proof}

To prove Lemma \ref{highBlocks}, we need to recall the following binomial tail inequality
\cite{Okamoto:58}: for $p \in (0,1)$, and any positive integers $r$, and $t \geq \eta r p$, for
some $\eta>1$, we have
\[ \prob{\binomial(r,p) \geq t}  \leq \pran{\varphi \pran{\frac{t}{rp}} }^t
               \leq (\varphi(\eta))^t  \, ,  \]
where $\varphi(x) = x^{-1} e^{1-1/x}$, which is decreasing on $(1, \infty)$. Notice that
$\varphi(x) < 1$, for any $x > 1$, because $1/x=(1-z) < e^{-z} = e^{1/x-1}$, for some $z \in
(0,1)$.

\begin{Lem5}
Let $\alpha$, $\delta$, and $\beta_3$ be as defined in Theorem \ref{walkTher}. Let $N$ be the
number of blocks with load of at least $\xi$ upon termination of algorithm
\textsc{WalkFirst}$(n,m)$. If $\xi \geq \delta \beta_3$, then $\prob{N \geq n/(a \beta_3 \xi) } =
o(1)$, for any constant $a>0$.
\end{Lem5}
\begin{proof}
Fix $\xi \geq \delta \beta_3$. Let $B$ denote the last block in the hash table, i.e., $B$ consists
of the cells $n-\beta_3, \ldots, n-1$. Let $L$ be the load of $B$ after termination. Since the
loads of the blocks are identically distributed, we have
\[ \Exp{N} = \frac{n}{\beta_3} \prob{L \geq \xi} \, .  \]
Let $S$ be the set of the consecutively occupied cells, after termination, that occur between the
first empty cell to the left of the block $B$ and the cell $n- \beta_3$; see Figure \ref{defBS}.

\begin{figure}[htbp]
\begin{center}
  \scalebox{1}{\input{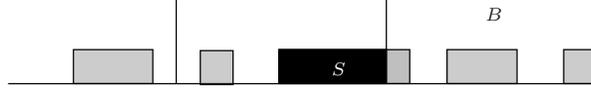}}
  \caption{The last part of the hash table showing clusters, the last
block $B$, and the set $S$.}\label{defBS}
\end{center}
\end{figure}

We say that a key is born in a set of cells $A$ if at least one of its two initial hashing cells
belong to $A$. For convenience, we write $\nu(A)$ to denote the number of keys that are born in
$A$. Obviously, $\nu(A)$ is $\binomial(m, 2\abs{A}/n)$. Since the cell adjacent to the left
boundary of $S$ is empty, all the keys that are inserted in $S$ are actually born in $S$. That is,
if $\abs{S} = j$, then $\nu(S) \geq j$. So, by the binomial tail inequality given earlier, we see
that
\[ \prob{\abs{S} =j }
     =   \prob{\event{\nu(S) \geq j} \cap \event{\abs{S} =j }}
   \leq  \prob{\binomial(m, 2j/n) \geq j} \leq c^j \, ,
\]
where the constant $c := \varphi(1/(2 \alpha)) = 2 \alpha e^{1-2 \alpha} < 1$, because $\alpha <
1/2$. Let
\[ \ell :=  \log_c \frac{1-c}{\xi^2} = O(\log \beta_3) \, . \]
and notice that for $n$ large enough,
\[ \xi \geq \delta \beta_3
       \geq \frac{\delta 2m (\ell+ \beta_3)}{(1+\ell/\beta_3)2 \alpha n}
       \geq  y \frac{ 2m(\ell+\beta_3)}{n}   \, ,
\]
where $y = 1/2 + \delta /(4 \alpha) > 1$, because $\delta \in (2 \alpha, 1)$. Clearly, by the same
property of $S$ stated above, $L \leq \nu(S \cup B)$; and hence, by the binomial tail inequality
again, we conclude that for $n$ sufficiently large,
\begin{eqnarray*}
 \prob{L \geq \xi }
   &\leq& \prob{\event{\nu(S \cup B) \geq \xi}
                          \cap \event{\abs{S} \leq \ell}}
              + \sum_{j=\ell}^m \prob{\abs{S} = j } \\
   &\leq& \prob{ \binomial(m, 2(\ell+\beta_3)/n) \geq \xi } + \frac{c^\ell}{1-c} \\
   &\leq& \pran{\varphi(y)}^\xi   + \frac{c^\ell}{1-c}
   ~ \leq ~ \frac{1}{\xi^2} + \frac{1}{\xi^2} ~ = ~ \frac{2}{\xi^2}\, .
\end{eqnarray*}
Thence, $\Exp{N} \leq 2 n/(\beta_3 \xi^2)$ which implies by Markov's inequality that
\[ \prob{N \geq \frac{n}{a \beta_3 \xi}} \leq \frac{2 a}{\xi} = o(1) \, . \]
\end{proof}

\begin{Lem6}
In any witness tree $W_k(h) \in \mathcal{ W}_k (h,w,b)$, if $h \geq 2$ and $ w \leq 2^{h- \eta}$,
where $\eta \geq 1$, then the number $b$ of black nodes  is $ \geq \eta $, i.e., $\ind{[b \geq
\eta] \cup [w>2^{h- \eta}]} = 1$.
\end{Lem6}
\begin{proof}
Note that any witness tree has at least one block node at distance $h$ from the root. If we have
$b$ black nodes, the number of block nodes is at least $2^{h-b}$. Since $w \leq 2^{h- \eta}$, then
$2^{h- \eta} -b +1 \geq w-b +1 \geq 2^{h-b}$. If $b=0$, then we have a contradiction. So, assume $b
\geq 1$. But then $2^{h- \eta} \geq 2^{h-b}$; that is, $b \geq \eta$.
\end{proof}

\end{document}